\Crefname{appsec}{Appendix}{Appendices}
\newtheorem{theorem}{Theorem}
\newtheorem{lemma}{Lemma}
\newtheorem{corollary}{Corollary}
\newtheorem{definition}{Definition}
\newtheorem{hypothesis}{Hypothesis}
\newcommand{\CC}{\mathbb C}
\newcommand{\RR}{\mathbb R}
\newcommand{\cA}{\mathcal{A}}
\newcommand{\cC}{\mathcal{C}}
\newcommand{\cD}{\mathcal{D}}
\newcommand{\cH}{\mathcal{H}}
\newcommand{\cL}{\mathcal{L}}
\newcommand{\cU}{\mathcal{U}}
\newcommand{\ol}{\overline}
\newcommand{\bs}[1]{\boldsymbol{#1}}
\newcommand{\parens}[1]{\left(#1\right)}
\newcommand{\brackets}[1]{\left[#1\right]}
\newcommand{\braces}[1]{\left\{#1\right\}}
\newcommand{\angles}[1]{\langle#1\rangle}
\newcommand{\abs}[1]{\left\lvert #1 \right\rvert}
\newcommand{\norm}[1]{\left\lVert #1 \right\rVert}
\newcommand{\twonorm}[1]{\norm{#1}_2}
\newcommand{\ceil}[1]{\left\lceil #1 \right\rceil}
\newcommand*\diff{\mathop{}\!\mathrm{d}}
\newcommand{\op}[2]{|#1\rangle\langle#2|}
\newcommand{\iunit}{\bs{\mathrm{i}}}
\renewcommand{\Re}{\operatorname{Re}}
\renewcommand{\Im}{\operatorname{Im}}
\DeclareMathOperator*{\EE}{\mathbb E}
\DeclareMathOperator{\Haar}{Haar}
\DeclareMathOperator{\Id}{Id}
\DeclareMathOperator{\poly}{poly}
\DeclareMathOperator{\rank}{rank}
\DeclareMathOperator{\Tr}{Tr}
\title{Fragmentation is Efficiently Learnable by Quantum Neural Networks}
\author{%
    Mikhail Mints \\
    Department of Computing + Mathematical Sciences \\
    California Institute of Technology \\
    Pasadena, CA 91125 \\
    \texttt{mmints@caltech.edu} \\
    \And
    Eric R. Anschuetz \\
    Institute for Quantum Information and Matter \& Walter Burke Institute for Theoretical Physics \\
    California Institute of Technology \\
    Pasadena, CA 91125 \\
    \texttt{eans@caltech.edu}
}
\begin{document}

\maketitle

\begin{abstract}
    In certain classes of physical quantum systems, the exponentially large state space ``fragments'' into many low-dimensional, dynamically disconnected subspaces. We introduce a learning problem known as \emph{fragment classification}, where given a quantum state input, one is interested in classifying to which subspace the state belongs. We prove that solving this learning problem is efficient on a quantum computer when the fragmentation phenomenon satisfies certain conditions. Furthermore, we give evidence supporting the classical hardness of this task by demonstrating that known dequantization techniques fail for the fragment classification problem. Consequently, this work provides a rare example of a physically motivated quantum machine learning task that is both efficient for quantum computers to perform and admits no known classical dequantization.
\end{abstract}

\section{Introduction}

Quantum algorithms are known to outperform the best known classical algorithms on many computational problems exhibiting some sort of algebraic structure~\citep{shor1999polynomial,gyurik2024quantum,berry2024analyzing,harrow2009quantum}. A significant research direction today is focused on trying to understand whether the unique advantages offered by quantum computers can help in machine learning tasks and whether \emph{quantum neural networks} (QNNs) can be superior to classical methods. Classical neural networks are known to behave like Gaussian processes in the asymptotic limit \citep{Neal_1996}, which provides theoretical guarantees on their efficient trainability in a wide range of applications. On the other hand, QNNs are known to have poorly-behaved loss landscapes. They are often dominated by \emph{barren plateaus}~\citep{mcclean2018barren,cerezo2021cost,ragone2023unified}, which generally prevent gradients from being estimated efficiently, and \emph{poor local minima}, which generally prevent gradient descent from reaching the global optimum \citep{Anschuetz_2022_Critical_Points,anschuetz2022barren,you2022convergence}. While there do exist settings where QNNs are efficient to train via gradient descent---such as when they have many symmetries~\citep{equivariant,schatzki2024theoretical,PRXQuantum.5.020328,PhysRevResearch.6.013241,anschuetzgao2022,PRXQuantum.5.030320,anschuetz2024arbitrary}---in these settings there often also exist efficient classical simulation algorithms that render the use of a quantum computer for the task unnecessary outside of potentially an initial data acquisition phase~\citep{Anschuetz_2023_Symmetric,Goh_2025,cerezo2023does}.

Recent work \citep{Anschuetz_2025} formulated the Jordan Algebraic Wishart System (JAWS) framework, which is the first full theoretical characterization of QNN loss landscapes that provides the conditions under which they can become efficiently trainable using gradient descent. In this work, we use these tools to formally demonstrate a specific setting where QNNs \emph{can} in fact be efficiently trained to solve a physically motivated problem for which no efficient classical simulation algorithm is known. Our approach is to showcase a setting where the QNN has a high-dimensional symmetry group but, crucially, the end-user does not know what that group is and---to the best of our knowledge---has no efficient strategy for learning it. At a high level, this mirrors the mechanism underlying other well-known quantum-classical separations, such as the abelian hidden subgroup problem~\citep{kitaev1995quantummeasurementsabelianstabilizer}.

The rest of our paper is organized as follows. First, in \Cref{sec:prelims} we give background on the physical mechanism underlying our learning task known as Hilbert space fragmentation, as well as the general quantum learning technique we will be utilizing here. Second, in \Cref{sec:main_results} we demonstrate our main results: we construct a QNN architecture, show that it satisfies the conditions required for efficient trainability, and then discuss the conjectured classical hardness of the learning task. Finally, we conclude in \Cref{sec:conc}.

\section{Preliminaries}\label{sec:prelims}

\subsection{Fragmentation and the Schur Basis}

The problem that we study in this work is that of classifying states in systems which exhibit \emph{Hilbert space fragmentation}~\citep{PhysRevX.9.021003,PhysRevX.10.011047,PhysRevB.101.174204}, a well-studied barrier to ergodicity in quantum many-body systems conceptually similar to the better-known quantum many-body scarring phenomenon~\citep{bernien2017probing,turner2018weak,Moudgalya_2024}. The mechanism underlying Hilbert space fragmentation has been recently mathematically characterized~\citep{Moudgalya_2022}, and we review this characterization here.

We let $L$ be the problem size---physically known as the \emph{system size}---which we will later take to be large. For each $L$, one is given a set of Hermitian operators $\mathcal{S}$ known as \emph{Hamiltonians} acting on a Hilbert space $\mathcal{H}$ with a tensor product structure $\mathcal{H}=\bigotimes_{i=1}^L\mathcal{H}_i$. $\mathcal{S}$ is constructed such that all Hamiltonians $\bs{H}\in\mathcal{S}$ take the form:
\begin{equation}
    \bs{H}=\sum_{i=1}^m c_i\bs{h}_i,
\end{equation}
where all $c_i\in\mathbb{R}$ and the $\bs{h}_i$ are \emph{local operators}, i.e., $\bs{h}_i$ acts nontrivially only on an $L$-independent number of tensor product factors $\mathcal{H}_i$. Furthermore, $m$ only grows at most polynomially quickly with $L$.

Let $\mathcal{A} = \langle \bs{h}_1, \dots, \bs{h}_m \rangle$ be the associative algebra generated by these operators under addition and matrix multiplication. Let $\cC$ be the \emph{commutant algebra}, consisting of all operators in $\cL(\cH)$ that commute with all elements of $\cA$. By the Von Neumann bicommutant theorem~\citep{Landsman_1998_bicommutant}, we know that $\cA$ and $\cC$ are each other's centralizers, and we can decompose the Hilbert space as:
\begin{equation}
\cH = \bigoplus_{\lambda=1}^{\Lambda} \cH_\lambda^{(\cA)} \otimes \cH_\lambda^{(\cC)},
\end{equation}
where each $\cH_\lambda^{(\cA)}$ is an irreducible representation of $\cA$ and each $\cH_\lambda^{(\cC)}$ is an irreducible representation of $\cC$. Let $N_\lambda = \dim\parens{\cH_\lambda^{(\cA)}}, N'_\lambda = \dim\parens{\cH_\lambda^{(\cC)}}$. In this decomposition, each $\cH_\lambda^{(\cA)}$ is called a \emph{Krylov subspace} of the system, preserved by the action of Hamiltonians constructed from the generators $\bs{h}_1, \dots, \bs{h}_m$. Each of these subspaces has $N'_\lambda$ degenerate copies. The representation in $\cH$ of any operator $A \in \cA$ can be written (in a slight abuse of notation identifying the operator with its representation) in the form:
\begin{equation}
\bs{A} = \bigoplus_{\lambda=1}^{\Lambda} \bs{A}^\lambda \otimes \Id_{N'_\lambda}.
\end{equation}
That is, any such operator acts as the identity on the ``multiplicity labels'' in $\cH_\lambda^{(\cC)}$, meaning that it acts identically on each of the $N_\lambda'$ degenerate Krylov subspaces isomorphic to $\cH_\lambda^{(\cA)}$.

The phenomenon of \emph{Hilbert space fragmentation} is said to occur when the total multiplicity, $\sum_\lambda N_\lambda'$, is exponential in the system size $L$ \citep{Moudgalya_2022}. We are interested in the stronger condition where the dimension of $\cA$ is polynomial in the system size. If a system does not have this property, we may be able to restrict to a subspace that does: we pick $\Lambda'$ as a function of $L$ to be such that
\begin{equation}
N = \sum_{\lambda=1}^{\Lambda'} N_\lambda = \poly(L).
\end{equation}
Note that this $\Lambda'$ may be a constant or it may be equal to $\Lambda$ depending on the system in question and the degree of degeneracy in the Krylov subspaces. We additionally make an assumption that $N_\lambda = \omega(1)$ for all $\lambda \in [\Lambda']$ - that is, the dimension of all sectors must grow as a function of $L$. Now, we can define the space
\begin{equation}
\cH' = \bigoplus_{\lambda=1}^{\Lambda'} \cH_\lambda^{(\cA)} \otimes \cH_\lambda^{(\cC)}.
\end{equation}
The representation in $\cH'$ of any operator $A \in \cA$ can then be written as
\begin{equation}
\bs{A} = \bigoplus_{\lambda=1}^{\Lambda'} \bs{A}^\lambda \otimes \Id_{N'_\lambda}.
\end{equation}

While this construction may seem ad hoc and technical, surprisingly the phenomenon of Hilbert space fragmentation occurs in many quantum systems. It underlies the ergodicity breaking of the Temperley--Lieb model~\citep{Batchelor_1990}, the $t$--$J_z$ model~\citep{Zhang_1997_T_Jz}, and the Pair-Flip model~\citep{Moudgalya_2022}.

One can also define an orthonormal basis of $\mathcal{H}$ respecting the decomposition of $\mathcal{H}$ into Krylov subspaces~\citep{Moudgalya_2022}. In analogy with the terminology for permutation-invariant quantum systems~\citep{PhysRevLett.97.170502}, we can construct a \emph{Schur basis} for $\cH'$ as a set of states $\ket{\lambda, q_\lambda, p_\lambda} = \ket{\lambda, q_\lambda} \otimes \ket{\lambda, p_\lambda}$ such that the states $\ket{\lambda, q_\lambda}$ form an orthonormal basis for $\cH_\lambda^{(\cA)}$ and the states $\ket{\lambda, p_\lambda}$ form an orthonormal basis for $\cH_\lambda^{(\cC)}$.

\subsection{The Classification Task}

Given a quantum state that is a basis state promised to be in the Schur basis, we would like to perform some kind of quantum measurement to determine which state exactly it is, ignoring the degeneracy. In some special settings---such as when the system is permutation invariant, and one is interested in the associated Schur basis of that Krylov decomposition---this can be done efficiently using a quantum algorithm known as the Schur transform~\citep{PhysRevLett.97.170502}. However, generally, we know only the generators $\bs{h}_1, \dots, \bs{h}_m$. Instead, we hope in effect to \emph{learn} the Schur transform associated with $\mathcal{S}$, assuming we are also given copies of Schur basis states $\ket{\lambda, q_\lambda, p_\lambda}$. Specifically, we want to learn a transformation $\bs{U}$ such that
\begin{equation}
\Tr_{\cH'} \brackets{\bs{U}^\dagger (\op{\lambda, q_\lambda, p_\lambda}{\lambda, q_\lambda, p_\lambda} \otimes \op{\bs{0}}{\bs{0}}) \bs{U}} \approx \op{\bs{\lambda}, \bs{q_\lambda}}{\bs{\lambda}, \bs{q_\lambda}}.
\end{equation}
Here, we have added an ancillary register of $n_a$ qubits with Hilbert space of dimension $N_a = 2^{n_a}$ and initialized it in the computational basis state $\ket{\bs{0}}$. We also use the notation $\ket{\bs{\lambda},\bs{q_\lambda}}$ to label bit strings encoded in the computational basis labeling $\lambda$ and $q_\lambda$. The unitary $U$ will be implemented as a quantum circuit that should write a bit string label $\bs{\lambda}, \bs{q_\lambda}$ to this ancillary register in the computational basis. Note that here we are ambivalent to the multiplicity label $p_\lambda$; there are exponentially many such labels, and we cannot hope to learn them efficiently. Instead, we also require that our network \emph{generalizes} its classification of any $\ket{\lambda, q_\lambda, p_\lambda}$ given training examples of $\ket{\lambda, q_\lambda, p_\lambda'}$ for $p_\lambda$ not necessarily equal to $p_\lambda'$. We call this problem \emph{fragment classification}.

\begin{definition}[Fragment Classification]\label{def:fragment_classification}
Suppose that we have a dataset
\begin{equation}
\cD = \{(\ket{x}, \bs{O}_x)\}_{x \in [M]},
\end{equation}
of size $M \leq N$ where each $\ket{x}$ is of the form $\ket{\lambda, q_\lambda, p_\lambda} \otimes \ket{0}$, and the corresponding $\bs{O}_x$ is of the form
\begin{equation}
\bs{O}_x = -\Id_{\cH'} \otimes \op{\bs{\lambda}, \bs{q_\lambda}}{\bs{\lambda}, \bs{q_\lambda}},
\end{equation}
where $\bs{\lambda}, \bs{q_\lambda}$ are bit string representations of $\lambda, q_\lambda$ on the ancillary register of $n_a = \ceil{\log_2 M}$ qubits. Let $N_a = 2^{n_a}$ be the dimension of the Hilbert space of the ancillary register. We say that a quantum learning algorithm solves the fragment classification problem if, given the dataset $\cD$, for any $\epsilon > 0$ it can construct in $\poly(L, 1/\epsilon)$ time a quantum circuit representing a unitary $\bs{U}$ acting on $\cH' \otimes \CC^{N_a}$, such that for any Schur basis state $\ket{\lambda, q_\lambda, p_\lambda}$ (including those $p_\lambda$ for which the state is not part of $\cD$), we have
\begin{equation}
\bra{\bs{\lambda}, \bs{q_\lambda}} \Tr_{\cH'} \brackets{\bs{U}^\dagger (\op{\lambda, q_\lambda, p_\lambda}{\lambda, q_\lambda, p_\lambda} \otimes \op{\bs{0}}{\bs{0}}) \bs{U}} \ket{\bs{\lambda}, \bs{q_\lambda}} \geq 1 - \epsilon.
\end{equation}
\end{definition}

\subsection{Variational Quantum Algorithms}

The quantum learning algorithm we will later construct will be a \emph{variational quantum algorithm}~\cite{peruzzo}. Here, one hopes to learn a unitary $\bs{U}$ by tuning parameters of a variational ansatz constructed using exponentials of local operators. The variational parameters of the ansatz are then tuned using a gradient-based optimizer to minimize some loss function.

Specializing to our setting, $\bs{U}\left(\bs{\theta}\right)$ will be constructed out of matrix exponentials of Hamiltonians constructed from the given generators $\bs{h}_1, \dots, \bs{h}_m$, and is parameterized by $\bs{\theta} \in \RR^p$. We use gradient descent to find optimal values of the parameters to minimize the loss function:
\begin{equation}
\ell(\bs{\theta}) =
\frac{1}{M} \sum_{x=1}^M \bra{x} \bs{U}(\bs{\theta}) \bs{O}_x \bs{U}(\bs{\theta})^\dagger \ket{x},
\end{equation}
where each $\ket{x}$ is of the form $\ket{\lambda, q_\lambda, p_\lambda} \otimes \ket{\bs{0}}$ and the corresponding \emph{objective observable} is
\begin{equation}
\bs{O}_x = -\Id_{\cH'} \otimes \op{\bs{\lambda}, \bs{q_\lambda}}{\bs{\lambda}, \bs{q_\lambda}}.
\end{equation}
The training process is illustrated in \Cref{fig:training_diagram}. We will give full details of the construction of $\bs{U}(\bs{\theta})$ in \Cref{sec:ansatz_const}, and demonstrate that $\ell\left(\bs{\theta}\right)$ can be efficiently optimized in \Cref{sec:eff_train}.

\begin{figure}
    \centering
    \includegraphics[scale=0.75]{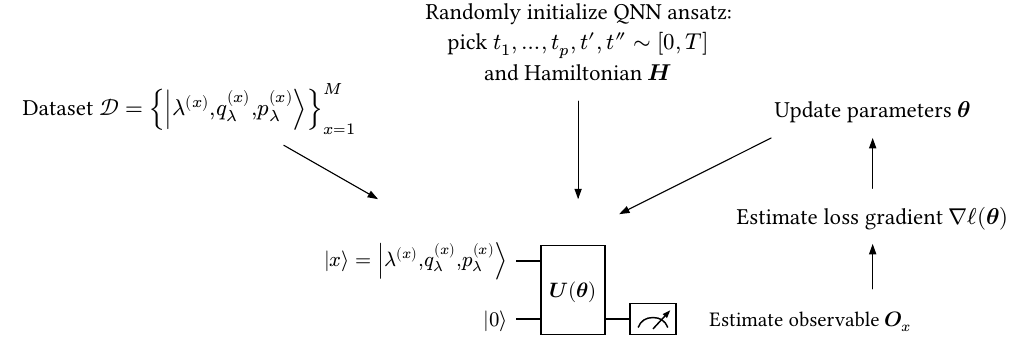}
    \caption{Diagram of our QNN training process. We are given a dataset of Schur basis states and randomly sample a QNN ansatz architecture and an initial vector of parameters (see \Cref{sec:ansatz_const}). We then compile the QNN into a quantum circuit and repeatedly run on the input states to estimate the gradient of the loss function, which is then used to adjust the parameters.}
    \label{fig:training_diagram}
\end{figure}

\subsection{Trainability Conditions}

To formally prove that QNNs can efficiently solve this problem, we need to demonstrate the two main conditions needed for a QNN to be efficiently trainable. The following are standard criteria of QNN trainability in the literature~\citep{Anschuetz_2025,cerezo2023does}:

\begin{enumerate}
\item \textbf{The absence of barren plateaus}. In the asymptotic limit of large system size, the distribution of the loss landscape over the random initialization of the QNN architecture converges to one where the derivatives only decay polynomially with the system size, thus allowing the gradient to be efficiently estimated on a quantum computer in polynomial time. We formally prove this in \Cref{thm:gradient_calculation}.
\item \textbf{The absence of poor local minima}. When the number of parameters $p$ is sufficiently large, the loss landscape enters an \emph{overparameterized regime} where all spurious local minima disappear, and the only remaining minima are \emph{degenerate}, existing on some submanifold of the parameter space. We follow the approach of \citet{Larocca_2023} and show the Hessian is not full-rank in order to demonstrate a lack of poor local minima. The value of $p$ needed for this to occur is only polynomially large in the system size, making the QNN in the overparameterized phase efficiently implementable on a quantum computer. We formally prove this in \Cref{thm:hessian_calculation}.
\end{enumerate}

\section{Main Results}\label{sec:main_results}

\subsection{The QNN Ansatz Construction}\label{sec:ansatz_const}
Suppose that we have a dataset
\begin{equation}
\cD = \{(\ket{x}, \bs{O}_x)\}_{x \in [M]},
\end{equation}
of size $M \leq N$ as in \Cref{def:fragment_classification}. We assume we are given at most one example $x$ for any given $\lambda,q_\lambda$ pair - we will see in \Cref{thm:thm_generalizability} that we do not require an example for each $p_\lambda$. Let $N_a = 2^{n_a}$ be the dimension of the Hilbert space of the ancillary register. Let
\begin{equation}
\cH^\ast = \cH' \otimes \CC^{N_a} = \bigoplus_{\lambda=1}^{\Lambda'} \cH_\lambda^\ast \otimes \cH_\lambda^{(\cC)},
\end{equation}
where we define $\cH_\lambda^\ast = \cH_\lambda^{(\cA)} \otimes \CC^{N_a}$, of dimension $N_\lambda^\ast = N_\lambda N_a$.

Now, we construct a randomized QNN ansatz parameterized by $\bs{\theta} = (\theta_1, \dots, \theta_p) \in \RR^p$ as follows:
\begin{equation}\label{eq:ansatz}
\bs{U}(\bs{\theta}) = e^{\iunit \bs{H} t''} \parens{\prod_{i=1}^p e^{-\iunit \bs{H} t_i} e^{\iunit \bs{A} \theta_i} e^{\iunit \bs{H} t_i}} e^{-\iunit \bs{H} t'},
\end{equation}
where we select $t_1, \dots, t_p, t', t''$ i.i.d. uniformly from $[0, T]$ for some fixed $T$, we let $\bs{H}$ be a Hamiltonian on the system and the ancillary register:
\begin{equation}\label{eq:hamiltonian}
\bs{H} = \sum_{i=1}^m c_i \bs{h}_i \otimes \parens{\sum_{j=1}^{n_a} c'_{i,j,x} \sigma^x_j + c'_{i,j,y} \sigma^y_j + c'_{i,j,z} \sigma^z_j},
\end{equation}
and we let $\bs{A}$ be a local operator block-diagonal in the Schur basis, which we assume without loss of generality is positive semidefinite to simplify our analysis---we can just take it to be a large multiple of the identity added to $\bs{h}_1 \otimes \Id_{N_a}$.

We assume that coefficients in \Cref{eq:hamiltonian} are chosen such that the Hamiltonian obeys the \emph{full Eigenstate Thermalization Hypothesis} (ETH)~\citep{PhysRevE.99.042139} in each Krylov subspace, which we assume is possible. The full ETH is an ansatz for higher-order time-averaged correlation functions of local observables, and has been shown to imply a quantitative scrambling behavior~\citep{Fava_2025}. More formally, we assume the following~\citep[Eq.~(5)]{Fava_2025}:
\begin{hypothesis}[Full ETH, free cumulant definition~{\citet[Eq.~(5)]{Fava_2025}}]\label{hyp:full_eth}
    We say a sequence of Hamiltonians $\bs{H}$ describing a size-$L$ system and acting on an $N$-dimensional Hilbert space obeys the full ETH if the following is true. For any $k$, there exists some $t_k = O\left(\poly(L,k)\right)$ such that choosing $t$ uniformly from $\left[0,t_k\right]$ and defining the free cumulant $\kappa_{2k}$ with respect to this distribution, for any local operators $\bs{A}_i\left(t\right)$ time-evolved under $\bs{H}$ and any local operators $\bs{B}_i$:
    \begin{equation}
        \kappa_{2k}\left[\bs{A}_1\left(t\right),\bs{B}_1,\bs{A}_2\left(t\right),\bs{B}_2,\ldots,\bs{A}_k\left(t\right),\bs{B}_k\right]=O\left(1/N\right).
    \end{equation}
\end{hypothesis}

Observe that we can block-diagonalize all operators in \Cref{eq:ansatz}, writing them as a sum over the irreducible representations (algebraic sectors) of $\cA$. We identify the operators with their representations in $\cH^*$:
\begin{equation}
\bs{U}(\bs{\theta}) = \bigoplus_{\lambda=1}^{\Lambda'} \bs{U}^\lambda(\bs{\theta}),
\end{equation}
where
\begin{equation}
\bs{U}^\lambda(\bs{\theta}) = e^{\iunit \bs{H}^\lambda t''} \parens{\prod_{i=1}^p e^{-\iunit \bs{H}^\lambda t_i} e^{\iunit \bs{A}^\lambda \theta_i} e^{\iunit \bs{H}^\lambda t_i}} e^{-\iunit \bs{H}^\lambda t'}.
\end{equation}

Define the loss function
\begin{equation}
\ell(\bs{\theta}) = \frac{1}{M} \sum_{x=1}^M \ell_x (\bs{\theta}) =
\frac{1}{M} \sum_{x=1}^M \bra{x} \bs{U}(\bs{\theta}) \bs{O}_x \bs{U}(\bs{\theta})^\dagger \ket{x}.
\end{equation}
We can also decompose this by sector. Note that since each input state is restricted to one of the sectors, we can write
\begin{equation}
\ell(\bs{\theta}) = \sum_{\lambda=1}^{\Lambda'} \sum_{x=1}^{M_\lambda} \ell_x^\lambda (\bs{\theta}) =
\frac{1}{M} \sum_{\lambda=1}^{\Lambda'} \sum_{x=1}^{M_\lambda} \bra{x} \bs{U}^\lambda(\bs{\theta}) \bs{O}^\lambda_x \bs{U}^\lambda(\bs{\theta})^\dagger \ket{x}.
\end{equation}
Note that the indexing of $x$ is separate for each $\lambda$, so we have slightly abused notation here and are letting each $x$ in the above equation corresponds to some $q_\lambda$, where $M_\lambda$ is the number of dataset entries in the $\lambda$ block.

We initially randomly choose the QNN ansatz by sampling $t_1, \dots, t_p, t', t''$ and choosing the Hamiltonian $\bs{H}$. Then, these become fixed, and only the parameterized parts of the form $e^{\iunit \bs{A} \theta_i}$ will be changed during training as we adjust the parameters $\theta_i$. To analyze the gradient of the loss function, we will consider its distribution at any particular value of $\bs{\theta}$ over the random choice of the QNN ansatz. We perform a sequence of reductions, first reducing the distribution to the case where we only have to consider $\bs{\theta} = 0$ and then writing the loss derivatives in terms of Gaussian-distributed random variables in the asymptotic limit, allowing us to prove the result about the variance of the gradient.

\subsection{Efficient Trainability}\label{sec:eff_train}

First, we establish the prior claim that indeed the training of our QNN ansatz does not require a data set of Schur basis states over all multiplicity labels.

\begin{theorem}\label{thm:thm_generalizability}
If we take the (exponential-size) dataset $\mathcal{D}'$ consisting of every Schur basis state $\ket{\lambda,q_\lambda,p_\lambda}$ for $\lambda,q_\lambda$ present in $\mathcal{D}$, the loss $\ell_{\mathcal{D}'}(\bs{\theta})$ with respect to that dataset is upper-bounded as $\ell_{\mathcal{D}'}(\bs{\theta}) \leq \max_{\lambda,x} \ell_x^\lambda (\bs{\theta})$.
\end{theorem}

\begin{proof}
This follows from the fact that every layer of the QNN is constructed from matrix exponentials of elements of $\mathcal{A}$. Thus, conjugating by $\bs{U}(\bs{\theta})$ will act in the same way on each degenerate Krylov subspace, which means that the loss contribution from the new datapoints for a given $\lambda, q_\lambda$ will be exactly the same as that for the point in the original dataset. So, we can write
\begin{equation}
\ell_{\mathcal{D}'}(\bs{\theta}) = \frac{1}{\abs{\mathcal{D}'}} \sum_{\lambda=1}^{\Lambda'} N'_\lambda \sum_{x=1}^{M_\lambda} \ell_x^\lambda (\bs{\theta}) \leq
\frac{\max_{\lambda,x} \ell_x^\lambda (\bs{\theta})}{\abs{\mathcal{D}'}} \sum_{\lambda=1}^{\Lambda'} N'_\lambda M_\lambda \leq \max_{\lambda,x} \ell_x^\lambda (\bs{\theta}).
\end{equation}
\end{proof}


Now, we want to analyze the derivatives of this loss function.
\begin{restatable}{lemma}{restateFirstDerivAtZero}\label{lem:1st_deriv_at_0}
At $\bs{\theta} = 0$, we have that
\begin{equation}
\partial_i \ell(\bs{\theta}) \Big|_{\bs{\theta} = 0} =
\frac{\iunit}{M} \sum_{\lambda=1}^{\Lambda'} \sum_{x=1}^{M_\lambda}
\bra{x} e^{\iunit \bs{H}^\lambda t''} \brackets{e^{-\iunit \bs{H}^\lambda t_i} \bs{A}^\lambda e^{\iunit \bs{H}^\lambda t_i}, \; e^{-\iunit \bs{H}^\lambda t'} \bs{O}^\lambda_x e^{\iunit \bs{H}^\lambda t'}} e^{-\iunit \bs{H}^\lambda t''} \ket{x}.
\end{equation}
\end{restatable}

Now, we want to analyze the distribution of the gradient of the loss function. To do this, we consider joint distributions of individual terms that contribute to the expression for the gradient.

\begin{restatable}{lemma}{restateETHMomentMatchingGradient}\label{lem:eth_moment_matching_gradient}
Consider a set of nonnegative integers $\{k_{\lambda, x, i, 1}, k_{\lambda, x, i, 2} : \lambda \in [\Lambda'], x \in [M_\lambda], i \in [p]\}$. Let
\begin{equation}
k = \sum_{\lambda=1}^{\Lambda'} \sum_{x=1}^{M_\lambda} \sum_{i=1}^p \left(k_{\lambda,x,i,1} + k_{\lambda,x,i,2}\right).
\end{equation}
Let $\bs{H}$ be a Hamiltonian satisfying the full Eigenstate Thermalization Hypothesis (Hypothesis~\ref{hyp:full_eth}). Then, there exists some $T = \poly(L, k)$ such that
\begin{equation}
\begin{aligned}
& \EE_{\substack{t_1, \dots, t_p, \\ t', t'' \sim \cU(0, T)}} \prod_{\lambda=1}^{\Lambda'} \prod_{x=1}^{M_\lambda} \prod_{i=1}^p \Tr\parens{e^{-\iunit \bs{H}^\lambda t''} \op{x}{x} e^{\iunit \bs{H}^\lambda t''} e^{-\iunit \bs{H}^\lambda t_i} \bs{A}^\lambda e^{\iunit \bs{H}^\lambda t_i} e^{-\iunit \bs{H}^\lambda t'} \bs{O}_x^\lambda e^{\iunit \bs{H}^\lambda t'}}^{k_{\lambda,x,i,1}} \\
& \hspace{8em} \Tr\parens{e^{-\iunit \bs{H}^\lambda t''} \op{x}{x} e^{\iunit \bs{H}^\lambda t''} e^{-\iunit \bs{H}^\lambda t'} \bs{O}_x^\lambda e^{\iunit \bs{H}^\lambda t'} e^{-\iunit \bs{H}^\lambda t_i} \bs{A}^\lambda e^{\iunit \bs{H}^\lambda t_i}}^{k_{\lambda,x,i,2}} = \\
={}& O(1 / N_\mathrm{min}^\ast) + \EE_{\substack{\bs{g}_1, \dots, \bs{g}_p, \\ \bs{g}', \bs{g}'' \sim \Haar}} \prod_{\lambda=1}^{\Lambda'} \prod_{x=1}^{M_\lambda} \prod_{i=1}^p
\Tr\parens{\bs{g}''^\lambda \op{x}{x} \bs{g}''^{\lambda \dagger} \bs{g}_i^\lambda \bs{A}^\lambda \bs{g}_i^{\lambda \dagger} \bs{g}'^\lambda \bs{O}_x^\lambda \bs{g}'^{\lambda \dagger}}^{k_{\lambda,x,i,1}} \\
& \hspace{10em} \Tr\parens{\bs{g}''^\lambda \op{x}{x} \bs{g}''^{\lambda \dagger} \bs{g}'^\lambda \bs{O}_x^\lambda \bs{g}'^{\lambda \dagger} \bs{g}_i^\lambda \bs{A}^\lambda \bs{g}_i^{\lambda \dagger}}^{k_{\lambda,x,i,2}},
\end{aligned}
\end{equation}
where
\begin{equation}
N_\mathrm{min}^\ast = \min_\lambda N_\lambda^\ast.
\end{equation}
Here, the Haar-random distribution is over the unitaries with the block-diagonal structure of the Krylov subspaces: equivalently, each $\bs{g}_i^\lambda, \bs{g}'^\lambda, \bs{g}''^\lambda$ is chosen to be Haar-random in each of the sectors.
\end{restatable}

Using \Cref{lem:eth_moment_matching_gradient}, we can perform a reduction where we approximate the distribution of the first derivatives by taking each $\bs{g}_i^\lambda, \bs{g}'^\lambda, \bs{g}''^\lambda$ to be Haar-random.

\begin{restatable}{lemma}{restateGradientHaarReduction}\label{lem:gradient_haar_reduction}
We can approximate, at any value of $\bs{\theta}$,
\begin{equation}
\partial_i \ell_x^\lambda(\bs{\theta}) \rightsquigarrow
\frac{\iunit}{M} \bra{x} \bs{g}''^{\lambda \dagger} \brackets{\bs{g}_i^\lambda \bs{A}^\lambda \bs{g}_i^{\lambda \dagger}, \; \bs{g}'^\lambda \bs{O}^\lambda_x \bs{g}'^{\lambda \dagger}} \bs{g}''^{\lambda} \ket{x},
\end{equation}
where $\bs{g}_1, \dots, \bs{g}_p, \bs{g'}, \bs{g''} \sim \Haar$ up to an $o(1)$ error in the L\'evy-Prokhorov metric (\Cref{def:levy_prokhorov}), if we choose a sufficiently large $T$ and we assume that $N_\mathrm{min} = \min_\lambda N_\lambda = \omega(1)$. Furthermore, this approximation is valid for the joint distribution of any \emph{two} gradient contributions $\partial_i \ell_x^\lambda(\bs{\theta})$ and $\partial_j \ell_y^{\lambda'}(\bs{\theta})$: their joint distribution can be approximated up to an $o(1)$ L\'evy-Prokhorov error by replacing the Hamiltonian time evolution with Haar-random matrices.
\end{restatable}

\begin{restatable}{lemma}{restateGradientGaussianReduction}\label{lem:gradient_gaussian_reduction}
We can approximate, at any value of $\bs{\theta}$,
\begin{equation}
\partial_i \ell_x^\lambda (\bs{\theta}) \rightsquigarrow
\frac{\iunit}{M {N_\lambda^\ast}^2}
\bra{x} \bs{g}''^{\lambda \dagger} \brackets{\widetilde{\bs{g}}_i^\lambda \bs{A}^\lambda \widetilde{\bs{g}}_i^{\lambda \dagger}, \; \widetilde{\bs{g}}'^\lambda \bs{O}^\lambda_x \widetilde{\bs{g}}'^{\lambda \dagger}} \bs{g}''^{\lambda} \ket{x},
\end{equation}
where $\widetilde{\bs{g}}_1^\lambda, \dots, \widetilde{\bs{g}}_p^\lambda, \widetilde{\bs{g}}'^\lambda$ have i.i.d. standard Gaussian entries, up to an $o(1)$ error in the L\'evy-Prokhorov metric, and the same approximation is valid for the joint distribution of any two loss contributions $\partial_i \ell_x^\lambda (\bs{\theta})$ and $\partial_j \ell_y^{\lambda'} (\bs{\theta})$.
\end{restatable}

We know by construction that $\bs{A}$ is positive semidefinite. We can diagonalize it, writing each block as
\begin{equation}
\bs{A}^\lambda = \sum_{\mu=1}^{N_\lambda^\ast} a^\lambda_\mu \op{\mu}{\mu}
\end{equation}
Now, we want to perform an additional reduction to replace $\bs{A}$ with a \emph{semi-isotropic} $\widetilde{\bs{A}}$, which we define as:
\begin{equation}
\widetilde{\bs{A}}^\lambda = \frac{\Tr(\bs{A}^\lambda)}{r_A^\lambda} \sum_{\mu=1}^{r_A^\lambda} \op{\mu}{\mu},
\end{equation}
where
\begin{equation}
r_A^\lambda = \ceil{\frac{\Tr(\bs{A}^\lambda)^2}{\Tr((\bs{A}^\lambda)^2)}}.
\end{equation}
That is, we want to replace $\bs{A}^\lambda$ with a matrix all of whose nonzero eigenvalues are the same, preserving the average eigenvalue of $\bs{A}^\lambda$.

Observe that
\begin{equation}
\Tr\parens{\widetilde{\bs{A}}^\lambda} = \Tr\parens{\bs{A}^\lambda}
\end{equation}
and (up to a small incurred by rounding $r_A^\lambda$ to an integer),
\begin{equation}
\Tr\parens{(\widetilde{\bs{A}}^\lambda)^2} = \frac{\Tr(\bs{A}^\lambda)^2}{r_A^\lambda} = \Tr((\bs{A}^\lambda)^2).
\end{equation}

\begin{restatable}{lemma}{restateSemiIsotropicAReduction}\label{lem:semi_isotropic_A_reduction}
Replacing $\bs{A}$ with $\widetilde{\bs{A}}$, we can write
\begin{equation}
\partial_i \ell_x^\lambda (\bs{\theta}) \rightsquigarrow
\hat{\ell}^\lambda_{x; i} := \frac{\iunit}{M {N_\lambda^\ast}^2}
\bra{x} \bs{g}''^{\lambda \dagger} \brackets{\widetilde{\bs{g}}_i^\lambda \widetilde{\bs{A}}^\lambda \widetilde{\bs{g}}_i^{\lambda \dagger}, \; \widetilde{\bs{g}}'^\lambda \bs{O}^\lambda_x \widetilde{\bs{g}}'^{\lambda \dagger}} \bs{g}''^{\lambda} \ket{x},
\end{equation}
up to an $o(1)$ error in the L\'evy-Prokhorov metric, assuming that $N_\mathrm{min} = \min_\lambda N_\lambda = \omega(1)$. The same is true for joint distributions of two derivative contributions as before.
\end{restatable}

Now that we have performed these reductions, we can, up to a vanishing error in L\'evy-Prokhorov metric, approximate the distribution of the gradient of the loss function as some $\widehat{\nabla \ell}$ for which we can find a variance lower bound.

\begin{restatable}{theorem}{restateGradientCalculation}\label{thm:gradient_calculation}
\begin{equation}
\EE\brackets{\norm{\widehat{\nabla \ell}}^2} = \sum_{\lambda=1}^{\Lambda'} \frac{2 M_\lambda (N_a - 1) \Tr((\bs{A}^\lambda)^2) p}{M^2 N_a^4 N_\lambda^2} = \Omega\parens{\frac{1}{\poly(L)}}.
\end{equation}
\end{restatable}

By proving \Cref{thm:gradient_calculation}, we have formally established one of the two main claims: the absence of barren plateaus. Since the gradient of the loss function only scales inversely polynomially with $L$ and the quantum circuit for our QNN ansatz is polynomial-size, it is efficient to estimate gradients on a quantum computer by performing repeated runs and measurements with varied parameters, and thus gradient descent can be efficiently performed.

We now argue that the model enters the overparameterized regime, as in \citet{Larocca_2023}, by showing that the Hessian of the loss landscape stops being full-rank when the number of parameters $p$ is only polynomial in the system size.

\begin{restatable}{theorem}{restateHessianCalculation}\label{thm:hessian_calculation}
For some $p = O(\poly(L))$, the Hessian matrix $\hat{\bs{H}}$ at any value of $\bs{\theta}$ (where $\bra{i} \hat{\bs{H}} \ket{j} = \partial_i \partial_j \ell(\bs{\theta})$) does not have full rank.
\end{restatable}

The formal proofs of the results in this section can be found in \Cref{app:formal_proofs}.

\subsection{Classical Hardness}

There exist many classical algorithms for efficiently learning properties of physical systems in the presence of symmetries~\citep{Anschuetz_2023_Symmetric,Goh_2025,cerezo2023does}. Fundamentally, the strategies of these algorithms make use of the fact that, when the algebraic structure of a learning problem is known a priori, classical algorithms can leverage this knowledge to circumvent the exponential scaling of naive classical simulation methods.

Our setting differs from these prior directions in one important way: we have no prior knowledge of the algebraic structure of the system (indeed, this is the property we are hoping to learn). While we have no formal reduction connecting this problem to standard complexity theoretic conjectures, at a high level, this resembles other problems conjectured to be difficult for classical algorithms, including the hidden subgroup problem~\citep{kitaev1995quantummeasurementsabelianstabilizer} and the state hidden subgroup problem~\citep{10.1145/3717823.3718118}. One way to see the apparent classical difficulty of this task is that, in general, one cannot even construct a sparse representation of basis elements of the algebra; we are only guaranteed that the algebra elements $\bs{h}_i$ have sparse representations.

As we have demonstrated that this problem can be efficiently solved using quantum neural networks, this establishes one of the few known settings (outside of sampling-based tasks~\citep{10.1145/3618260.3649722}) where a quantum machine learning task has no known dequantization.

\section{Conclusion}\label{sec:conc}

In this work, we demonstrated a physically motivated problem setting where (1) quantum neural networks efficiently solve the problem under a realistic set of assumptions and (2) there are no known existing classical algorithms for simulating these networks, despite the substantial recent progress in methods for dequantizing quantum learning algorithms~\citep{Anschuetz_2023_Symmetric,Goh_2025,cerezo2023does}. At a high level, our results followed by finding a setting where the quantum network was able to take advantage of algebraic structure present in a problem, but for which this structure is not given a convenient classical description for a classical simulator to take advantage of. We invite others to test our results by attempting to develop classical simulation methods in this more restricted setting, as empirical tests of the tractability of this task have implications for not only the potential utility of the algorithm presented here, but also more generally for determining unknown algebraic properties of symmetric quantum systems.

The problem we presented as a showcase for our QNN construction is physically motivated, and we hope may be of practical interest: as physicists study new quantum systems exhibiting Hilbert space fragmentation, our learning algorithm produces a quantum circuit which can be used to label quantum states by the dynamically disconnected subspace to which they belong. This can help physicists better understand the properties and structure of fragmented physical systems.

While we examine one specific setting with this property, we believe this strategy for finding physically motivated problems for quantum learning algorithms can be more generally leveraged. We hope this work motivates future studies of settings where quantum neural networks can take advantage of symmetries that are inaccessible to classical computers.

\begin{ack}
This work was done while M.M.\ was a part of Caltech’s SURF program at the Institute for Quantum Information and Matter. M.M.\ was supported by the Arthur R.\ Adams SURF fellowship and would like to thank John Preskill for his support as a mentor. E.R.A.\ is funded in part by the Walter Burke Institute for Theoretical Physics at Caltech.
\end{ack}

\bibliographystyle{ACM-Reference-Format}
\bibliography{references}

\section*{Appendices}

\appendix

\crefalias{section}{appsec}

\section{Formal Proofs of the Main Results}\label{app:formal_proofs}

\restateFirstDerivAtZero*

\begin{proof}
\begin{equation}
\begin{aligned}
& \frac{\partial}{\partial \theta_i} \ell(\bs{\theta}) =
\frac{\iunit}{M} \sum_{\lambda=1}^{\Lambda'} \sum_{x=1}^{M_\lambda}
\frac{\partial}{\partial \theta_i} \bra{x} \bs{U}^\lambda(\bs{\theta}) \bs{O}^\lambda_x \bs{U}^\lambda(\bs{\theta})^\dagger \ket{x} = \\
={}& \frac{\iunit}{M} \sum_{\lambda=1}^{\Lambda'} \sum_{x=1}^{M_\lambda}
\frac{\partial}{\partial \theta_i} \bra{x}
e^{\iunit \bs{H}^\lambda t''}
\parens{\prod_{j=1}^p e^{-\iunit \bs{H}^\lambda t_j} e^{\iunit \theta_j \bs{A}^\lambda} e^{\iunit \bs{H}^\lambda t_j}}
e^{-\iunit \bs{H}^\lambda t'} \bs{O}^\lambda_x e^{\iunit \bs{H}^\lambda t'} \\
& \hspace{11em} \parens{\prod_{j=p}^1 e^{-\iunit \bs{H}^\lambda t_j} e^{-\iunit \theta_j\bs{A}^\lambda} e^{\iunit \bs{H}^\lambda t_j}}
e^{-\iunit \bs{H}^\lambda t''} \ket{x} = \\
={}& \frac{\iunit}{M} \sum_{\lambda=1}^{\Lambda'} \sum_{x=1}^{M_\lambda}
\bra{x} \Bigg( e^{\iunit \bs{H}^\lambda t''} \parens{\prod_{j=1}^p e^{-\iunit \bs{H}^\lambda t_j} e^{\iunit \theta_j \bs{A}^\lambda} \parens{\bs{A}^\lambda}^{\delta_{ij}} e^{\iunit \bs{H}^\lambda t_j}}
e^{-\iunit \bs{H}^\lambda t'} \bs{O}^\lambda_x e^{\iunit \bs{H}^\lambda t'} \cdot \\
& \hspace{11em} \cdot \parens{\prod_{j=p}^1 e^{-\iunit \bs{H}^\lambda t_j} e^{-\iunit \theta_j\bs{A}^\lambda} e^{\iunit \bs{H}^\lambda t_j}} e^{-\iunit \bs{H}^\lambda t''} - \\
& \hspace{5em} - e^{\iunit \bs{H}^\lambda t''} \parens{\prod_{j=1}^p e^{-\iunit \bs{H}^\lambda t_j} e^{\iunit \theta_j \bs{A}^\lambda} e^{\iunit \bs{H}^\lambda t_j}}
e^{-\iunit \bs{H}^\lambda t'} \bs{O}^\lambda_x e^{\iunit \bs{H}^\lambda t'} \cdot \\
& \hspace{11em} \cdot \parens{\prod_{j=p}^1 e^{-\iunit \bs{H}^\lambda t_j} e^{-\iunit \theta_j\bs{A}^\lambda} \parens{\bs{A}^\lambda}^{\delta_{ij}} e^{\iunit \bs{H}^\lambda t_j}} e^{-\iunit \bs{H}^\lambda t''} \Bigg) \ket{x}.
\end{aligned}
\end{equation}

At $\bs{\theta} = 0$, this simplifies to
\begin{equation}
\partial_i \ell(\bs{\theta}) \Big|_{\bs{\theta} = 0} =
\frac{\iunit}{M} \sum_{\lambda=1}^{\Lambda'} \sum_{x=1}^{M_\lambda}
\bra{x} e^{\iunit \bs{H}^\lambda t''} \brackets{e^{-\iunit \bs{H}^\lambda t_i} \bs{A}^\lambda e^{\iunit \bs{H}^\lambda t_i}, \; e^{-\iunit \bs{H}^\lambda t'} \bs{O}^\lambda_x e^{\iunit \bs{H}^\lambda t'}} e^{-\iunit \bs{H}^\lambda t''} \ket{x}.
\end{equation}
\end{proof}

\newpage

\restateETHMomentMatchingGradient*

\begin{proof}
We can justify the above claim using generally accepted physical assumptions as follows. Arguing similarly to \citet[Appendix C]{Fava_2025}, if we assume Hypothesis~\ref{hyp:full_eth}, we can say that for a sufficiently large time $T$,
\begin{equation}
\begin{aligned}
& \EE_{\substack{t_1, \dots, t_p, \\ t', t'' \sim \cU(0, T)}} \prod_{\lambda=1}^{\Lambda'} \prod_{x=1}^{M_\lambda} \prod_{i=1}^p \Tr\parens{e^{-\iunit \bs{H}^\lambda t''} \op{x}{x} e^{\iunit \bs{H}^\lambda t''} e^{-\iunit \bs{H}^\lambda t_i} \bs{A}^\lambda e^{\iunit \bs{H}^\lambda t_i} e^{-\iunit \bs{H}^\lambda t'} \bs{O}_x^\lambda e^{\iunit \bs{H}^\lambda t'}}^{k_{\lambda,x,i,1}} \\
& \hspace{8em} \Tr\parens{e^{-\iunit \bs{H}^\lambda t''} \op{x}{x} e^{\iunit \bs{H}^\lambda t''} e^{-\iunit \bs{H}^\lambda t'} \bs{O}_x^\lambda e^{\iunit \bs{H}^\lambda t'} e^{-\iunit \bs{H}^\lambda t_i} \bs{A}^\lambda e^{\iunit \bs{H}^\lambda t_i}}^{k_{\lambda,x,i,2}} = \\
={}& \prod_{\lambda=1}^{\Lambda'} \Bigg( O(1 / N_\lambda^\ast) + \\
& + \prod_{x=1}^{M_\lambda} \prod_{i=1}^p \EE_{\substack{t_1, \dots, t_p, \\ t', t'' \sim \cU(0, T)}} \brackets{\Tr\parens{e^{-\iunit \bs{H}^\lambda t''} \op{x}{x} e^{\iunit \bs{H}^\lambda t''} e^{-\iunit \bs{H}^\lambda t_i} \bs{A}^\lambda e^{\iunit \bs{H}^\lambda t_i} e^{-\iunit \bs{H}^\lambda t'} \bs{O}_x^\lambda e^{\iunit \bs{H}^\lambda t'}}}^{k_{\lambda,x,i,1}}  \\
& \hspace{1em} \EE_{\substack{t_1, \dots, t_p, \\ t', t'' \sim \cU(0, T)}} \brackets{\Tr\parens{e^{-\iunit \bs{H}^\lambda t''} \op{x}{x} e^{\iunit \bs{H}^\lambda t''} e^{-\iunit \bs{H}^\lambda t'} \bs{O}_x^\lambda e^{\iunit \bs{H}^\lambda t'} e^{-\iunit \bs{H}^\lambda t_i} \bs{A}^\lambda e^{\iunit \bs{H}^\lambda t_i}}}^{k_{\lambda,x,i,2}} \Bigg) = \\
={}& O(1 / N_\mathrm{min}^\ast) + \prod_{\lambda=1}^{\Lambda'} \prod_{x=1}^{M_\lambda} \prod_{i=1}^p \\
& \hspace{3em} \EE_{\substack{t_1, \dots, t_p, \\ t', t'' \sim \cU(0, T)}} \brackets{\Tr\parens{e^{-\iunit \bs{H}^\lambda t''} \op{x}{x} e^{\iunit \bs{H}^\lambda t''} e^{-\iunit \bs{H}^\lambda t_i} \bs{A}^\lambda e^{\iunit \bs{H}^\lambda t_i} e^{-\iunit \bs{H}^\lambda t'} \bs{O}_x^\lambda e^{\iunit \bs{H}^\lambda t'}}}^{k_{\lambda,x,i,1}} \\
& \hspace{3em} \EE_{\substack{t_1, \dots, t_p, \\ t', t'' \sim \cU(0, T)}} \brackets{\Tr\parens{e^{-\iunit \bs{H}^\lambda t''} \op{x}{x} e^{\iunit \bs{H}^\lambda t''} e^{-\iunit \bs{H}^\lambda t'} \bs{O}_x^\lambda e^{\iunit \bs{H}^\lambda t'} e^{-\iunit \bs{H}^\lambda t_i} \bs{A}^\lambda e^{\iunit \bs{H}^\lambda t_i}}}^{k_{\lambda,x,i,2}}.
\end{aligned}
\end{equation}

Now, using the results of \citet{Fava_2025}, we can write
\begin{equation}
\begin{aligned}
& \EE_{\substack{t_1, \dots, t_p, \\ t', t'' \sim \cU(0, T)}} \Tr\parens{e^{-\iunit \bs{H}^\lambda t''} \op{x}{x} e^{\iunit \bs{H}^\lambda t''} e^{-\iunit \bs{H}^\lambda t_i} \bs{A}^\lambda e^{\iunit \bs{H}^\lambda t_i} e^{-\iunit \bs{H}^\lambda t'} \bs{O}_x^\lambda e^{\iunit \bs{H}^\lambda t'}} = \\
={} & \frac{1}{(N_\lambda^\ast)^2} \Tr(\bs{A}^\lambda) \Tr(\bs{O}_x^\lambda).
\end{aligned}
\end{equation}

Now, when we replace the time evolution with the Haar-random unitaries, by free independence in the asymptotic limit \citep{Voiculescu_1991} we will get the same result as above up to an $O(1 / N_\mathrm{min}^\ast)$ error.
\end{proof}

\restateGradientHaarReduction*

\begin{proof}
Consider the joint distribution $\mathfrak{p}$ of the terms in the expression for the $\partial_i \ell_x^\lambda$ for $\bs{\theta} = 0$, as in \Cref{lem:1st_deriv_at_0}, considering the commutator expression as two separate terms. The terms are of the form
\begin{equation}
\bra{x} e^{\iunit \bs{H}^\lambda t''} e^{-\iunit \bs{H}^\lambda t_i} \bs{A}^\lambda e^{\iunit \bs{H}^\lambda t_i} e^{-\iunit \bs{H}^\lambda t'} \bs{O}^\lambda_x e^{\iunit \bs{H}^\lambda t'} e^{-\iunit \bs{H}^\lambda t''} \ket{x}
\end{equation}
and
\begin{equation}
\bra{x} e^{\iunit \bs{H}^\lambda t''} e^{-\iunit \bs{H}^\lambda t'} \bs{O}^\lambda_x e^{\iunit \bs{H}^\lambda t'} e^{-\iunit \bs{H}^\lambda t_i} \bs{A}^\lambda e^{\iunit \bs{H}^\lambda t_i} e^{-\iunit \bs{H}^\lambda t''} \ket{x}.
\end{equation}
and there are $d = 2$ of these terms, so the distribution is in $\RR^2$. We can also consider the joint distribution of the terms contributing to both $\partial_i \ell_x^\lambda(\bs{\theta})$ and $\partial_j \ell_y^{\lambda'}(\bs{\theta})$, in which case the distribution will be in $\RR^4$. Similarly, let $\mathfrak{q}$ be the joint distribution of the same terms, but with the time-evolved Hamiltonian replaced with the Haar-random matrices $\bs{g}_i, \bs{g}_j, \bs{g}', \bs{g}''$. Now, from \Cref{lem:eth_moment_matching_gradient}, we know that we can pick $T = \poly(L, k)$ such that mixed moments of order $k$ of the distributions $\mathfrak{p}$ and $\mathfrak{q}$ differ by an additive error of $\epsilon = 1 / N_\mathrm{min}^\ast$ for any fixed. The random variables corresponding to the terms are bounded, and thus they are subgaussian \citep{Vershynin_2018} and thus their moments of order $k$ are upper bounded by $\parens{C \sqrt{k}}^k$ for some constant $C$ independent of $k$.
Now, applying \Cref{lem:lp_bound_from_moments} gives us that the distribution $\mathfrak{p}$ converges to $\mathfrak{q}$ up to an error of
\begin{equation}
O\parens{\frac{\log \log N_\mathrm{min}}{\log N_\mathrm{min}} + \frac{\log k}{\sqrt{k}}} = o(1)
\end{equation}
in the L\'evy-Prokhorov metric (note that the $\mu$ term is subleading). Now, consider taking a general value of $\bs{\theta}$. Then, each of the terms becomes conjugated by some unitary matrix (see \Cref{lem:1st_deriv_at_0}). By the unitary invariance property of the Haar ensemble, the mixed moments of the distribution $\mathfrak{q}$ remain the same, which means that it approximates the joint distribution of the derivative terms at any value of $\bs{\theta}$. since the L\'evy-Prokhorov distance is preserved under the linear transformation of summing the terms to form the commutator expressions and is only reduced when dividing by the factor of $M$, the proof is complete.
\end{proof}

\restateGradientGaussianReduction*

\begin{proof}
We first perform the reduction to Haar-random matrices as in \Cref{lem:gradient_haar_reduction}. Now, we can proceed similarly to Lemma 27 of \citet{Anschuetz_2025}, based on the results of \citet{Jiang_2010}. From this, we incur an additional error of
\begin{equation}
O\parens{\sqrt{\frac{\log N^\ast}{N^\ast}}} = o(1)
\end{equation}
in the L\'evy-Prokhorov metric.
\end{proof}

\restateSemiIsotropicAReduction*

\begin{proof}
Consider each of the terms contributing to the gradient as previously. Replace each $\bs{A}^\lambda$ term with $\bs{A}^\lambda - \widetilde{\bs{A}}^\lambda$, so we have terms of the form
\begin{equation}
\begin{aligned}
& L^\lambda_{x,i} \coloneq \\
={} & \frac{1}{{M N_\lambda^\ast}^2} \parens{
\bra{x} \widetilde{\bs{g}}_i^\lambda \parens{\bs{A}^\lambda - \widetilde{\bs{A}}^\lambda} \widetilde{\bs{g}}_i^{\lambda \dagger} \widetilde{\bs{g}}'^\lambda \bs{O}^\lambda_x \widetilde{\bs{g}}'^{\lambda \dagger} \ket{x}
- \bra{x} \widetilde{\bs{g}}'^\lambda \bs{O}^\lambda_x \widetilde{\bs{g}}'^{\lambda \dagger} \widetilde{\bs{g}}_i^\lambda \parens{\bs{A}^\lambda - \widetilde{\bs{A}}^\lambda} \widetilde{\bs{g}}_i^{\lambda \dagger} \ket{x}
} = \\
={} & \frac{1}{M {N_\lambda^\ast}^2} \Tr\parens{\bs{M}^\lambda_x \widetilde{\bs{g}}_i^\lambda \parens{\bs{A}^\lambda - \widetilde{\bs{A}}^\lambda} \widetilde{\bs{g}}_i^{\lambda \dagger}},
\end{aligned}
\end{equation}
and
\begin{equation}
\bs{M}^\lambda_x = \widetilde{\bs{g}}'^\lambda \bs{O}^\lambda_x \widetilde{\bs{g}}'^{\lambda \dagger} \op{x}{x} - \op{x}{x} \widetilde{\bs{g}}'^\lambda \bs{O}^\lambda_x \widetilde{\bs{g}}'^{\lambda \dagger}
\end{equation}
is an anti-Hermitian traceless operator of rank $2$ and operator norm at most $2$. We can therefore write it in the form
\begin{equation}
\bs{M}^\lambda_x = \iunit \mu^\lambda_x \op{\psi^\lambda_{x}}{\psi^\lambda_{x}} - \iunit \mu^\lambda_x \op{\varphi^\lambda_{x}}{\varphi^\lambda_{x}}
\end{equation}
with $\abs{\mu_x^\lambda}$ bounded by $2$. Now, we can also write
\begin{equation}
\widetilde{\bs{g}}_i^\lambda \parens{\bs{A}^\lambda - \widetilde{\bs{A}}^\lambda} \widetilde{\bs{g}}_i^{\lambda \dagger} =
\sum_{k=1}^{N_\lambda^\ast} (a_k^\lambda - \widetilde{a}_k^\lambda) \op{\chi^\lambda_{i,k}}{\chi^\lambda_{i,k}},
\end{equation}
where the $a_k^\lambda, \widetilde{a}_k^\lambda$ are the eigenvalues of $\bs{A}^\lambda, \widetilde{\bs{A}}^\lambda$ in non-increasing order and $\ket{\chi^\lambda_{i,k}}$ is the $k$th column of $\widetilde{\bs{g}}_i^\lambda$ in the basis in which $\bs{A}^\lambda$ and $\widetilde{\bs{A}}^\lambda$ are both diagonal. So now, we have that
\begin{equation}
\begin{aligned}
& L^\lambda_{x,i} =
\frac{\iunit \mu^\lambda_x}{M {N_\lambda^\ast}^2} \sum_{k=1}^{N_\lambda^\ast} (a_k^\lambda - \widetilde{a}_k^\lambda) \parens{\abs{\braket{\chi^\lambda_{i,k}}{\psi^\lambda_{x}}}^2 - \abs{\braket{\chi^\lambda_{i,k}}{\varphi^\lambda_{x}}}^2}.
\end{aligned}
\end{equation}

Now, for any normalized vector $\ket{\psi}$, observe that $\abs{\braket{\chi^\lambda_{i,k}}{\psi}}^2$ is a sum of squared Gaussian random variables, so $\abs{\braket{\chi^\lambda_{i,k}}{\psi^\lambda_{x}}}^2 - \abs{\braket{\chi^\lambda_{i,k}}{\varphi^\lambda_{x}}}^2$ is subexponential. Then, by the version of Bernstein's inequality found in \citet{Vershynin_2018}, since $\abs{\mu^\lambda_x} \leq 2$, we have that
\begin{equation}
\begin{aligned}
& \Pr\brackets{\abs{L^\lambda_{x,i}} \geq \epsilon} \leq \\
& \Pr\brackets{
\abs{
\frac{\abs{\mu^\lambda_x}}{M {N_\lambda^\ast}^2} \sum_{k=1}^{N_\lambda^\ast} (a_k^\lambda - \widetilde{a}_k^\lambda) \parens{\abs{\braket{\chi^\lambda_{i,k}}{\psi^\lambda_{x}}}^2 - \abs{\braket{\chi^\lambda_{i,k}}{\varphi^\lambda_{x}}}^2}
} \geq \epsilon} \leq \\
\leq{} & 2 \exp \brackets{-c \min\parens{
\frac{\epsilon}{\frac{2}{M {N_\lambda^\ast}^2} \max_k \abs{a_k^\lambda - \widetilde{a}_k^\lambda}},
\frac{\epsilon^2}{\frac{4}{M^2 {N_\lambda^\ast}^4} \sum_k \parens{a_k^\lambda - \widetilde{a}_k^\lambda}^2}
}} \leq \\
\leq{} & 2 \exp \brackets{-c M {N_\lambda^\ast}^2 \min\parens{
\frac{\epsilon}{2 \max_k \abs{a_k^\lambda - \widetilde{a}_k^\lambda}},
\frac{\epsilon^2}{\frac{4}{M {N_\lambda^\ast}^2} \sum_k \parens{a_k^\lambda - \widetilde{a}_k^\lambda}^2}
}}
\end{aligned}
\end{equation}
for some constant $c > 0$. Now, observe that
\begin{equation}
\begin{aligned}
& \frac{1}{M {N_\lambda^\ast}^2} \sum_i \parens{a_k^\lambda - \widetilde{a}_k^\lambda}^2 =
\frac{1}{M {N_\lambda^\ast}^2} \Tr\brackets{\parens{\bs{A}^\lambda - \widetilde{\bs{A}}^\lambda}^2} = \\
={}& \frac{1}{M {N_\lambda^\ast}^2} \Tr\brackets{\parens{\bs{A}^\lambda}^2 + \parens{\widetilde{\bs{A}}^\lambda}^2 - \bs{A}^\lambda \widetilde{\bs{A}}^\lambda - \widetilde{\bs{A}}^\lambda \bs{A}^\lambda} =
\frac{2}{M {N_\lambda^\ast}^2} \parens{\Tr((\bs{A}^\lambda)^2) - \Tr\parens{\bs{A}^\lambda \widetilde{\bs{A}}^\lambda}}.
\end{aligned}
\end{equation}
Now, since
\begin{equation}
\Tr\parens{\bs{A}^\lambda \widetilde{\bs{A}}^\lambda} =
\frac{\Tr(\bs{A}^\lambda)}{r_A^\lambda} \sum_{i=1}^{r_A^\lambda} a_k^\lambda \geq
\frac{\Tr((\bs{A}^\lambda)^2)}{\Tr(\bs{A}^\lambda)} \parens{\Tr(\bs{A}^\lambda) - N_\lambda^\ast a_{r_A^\lambda + 1}^\lambda},
\end{equation}
we must have that
\begin{equation}
\frac{4}{M {N_\lambda^\ast}^2} \sum_i \parens{a_k^\lambda - \widetilde{a}_k^\lambda}^2 \leq
\frac{8}{M {N_\lambda^\ast}^2} \frac{\Tr((\bs{A}^\lambda)^2)}{\Tr(\bs{A}^\lambda)} N_\lambda^\ast a_{r_A^\lambda + 1}^\lambda = o(1),
\end{equation}
which holds since $\bs{A}$ is a local operator and its eigenvalues do not depend on the system size. We also have that
\begin{equation}
2 \max_k \abs{a_k^\lambda - \widetilde{a}_k^\lambda} =
O\parens{a_1^\lambda - \frac{\Tr((\bs{A}^\lambda)^2)}{\Tr(\bs{A}^\lambda)} a_{r_A^\lambda + 1}} = O(1).
\end{equation}
Now, we can just apply the union bound on all terms (of which there at most $2$ since we are considering the joint distribution of at most $2$ contributions) to say that
\begin{equation}
\Pr\brackets{\abs{L^\lambda_{x,i}} \geq \epsilon \vee \abs{L^{\lambda'}_{x',i'}} \geq \epsilon} \leq 4 \exp\brackets{-c M N_\mathrm{min}^{\ast 2} \min(\epsilon, \epsilon^2)}.
\end{equation}
Now, to solve for the Ky Fan metric, which upper-bounds the L\'evy-Prokhorov metric~\citep{Strassen_1965}, we need to find an upper bound on
\begin{equation}
\inf \braces{\epsilon > 0 : \Pr\brackets{\abs{L^\lambda_{x,i}} \geq \epsilon \vee \abs{L^{\lambda'}_{x',i'}} \geq \epsilon} \leq \epsilon}.
\end{equation}
For the case where $\epsilon < \epsilon^2$, equality occurs when
\begin{align}
4 \exp\parens{-c M N_\mathrm{min}^{\ast 2} \epsilon} &= \epsilon \\
\epsilon \exp\parens{c M N_\mathrm{min}^{\ast 2} \epsilon} &= 4 \\
c M N_\mathrm{min}^{\ast 2} \epsilon \exp\parens{c M N_\mathrm{min}^{\ast 2} \epsilon} &=
4 c M N_\mathrm{min}^{\ast 2} \\
c M N_\mathrm{min}^{\ast 2} \epsilon &=
W(4 c M N_\mathrm{min}^{\ast 2}),
\end{align}
where $W$ is the Lambert $W$ function. Since it is upper-bounded by the logarithm, an upper bound for the Ky Fan metric is
\begin{equation}
\begin{aligned}
c M N_\mathrm{min}^{\ast 2} \epsilon &\leq
\log(4 c M N_\mathrm{min}^{\ast 2}) \\
\epsilon &\leq
\frac{\log(4 c M N_\mathrm{min}^{\ast 2})}{c M N_\mathrm{min}^{\ast 2}} = O\parens{\frac{\log(M N_\mathrm{min})}{M N_\mathrm{min}^2}}.
\end{aligned}
\end{equation}
When $\epsilon^2 < \epsilon$ (when $\epsilon < 1$), equality occurs when
\begin{align}
4 \exp\parens{-c M N_\mathrm{min}^{\ast 2} \epsilon^2} &= \epsilon \\
c M N_\mathrm{min}^{\ast 2} \epsilon^2 &=
W(4 c M N_\mathrm{min}^{\ast 2} \epsilon),
\end{align}
so the upper bound for the Ky Fan metric (and thus the L\'evy-Prokhorov metric) is
\begin{align}
c M N_\mathrm{min}^{\ast 2} \epsilon^2 &\leq
\log(4 c M N_\mathrm{min}^{\ast 2} \epsilon) \leq
\log(4 c M N_\mathrm{min}^{\ast 2}) \\
\epsilon &= O\parens{\frac{\sqrt{\log(M N_\mathrm{min})}}{M N_\mathrm{min}^2}} = o(1).
\end{align}
Since $\epsilon < 1$ for large enough system sizes, this is the correct asymptotic bound.
\end{proof}

\restateGradientCalculation*

\begin{proof}
We know by the reductions above that the contribution to the derivative with respect to $\theta_i$ from entry $x$ in sector $\lambda$ can be approximated as
\begin{equation}
\hat{\ell}_{x; i}^\lambda = \frac{\iunit}{M {N_\lambda^\ast}^2} \bra{x} \widetilde{\bs{g}}''^{\lambda \dagger} \brackets{\widetilde{\bs{g}}_i^\lambda \widetilde{\bs{A}}^\lambda \widetilde{\bs{g}}_i^{\lambda \dagger}, \; \widetilde{\bs{g}}'^\lambda \bs{O}^\lambda_x \widetilde{\bs{g}}'^{\lambda \dagger}} \widetilde{\bs{g}}''^\lambda \ket{x}.
\end{equation}
By unitary invariance, we can replace $\widetilde{\bs{g}}_i^\lambda$ with $\widetilde{\bs{g}}'^\lambda \widetilde{\bs{g}}_i^\lambda$ while maintaining the same probability distribution (note that technically, we are applying this \emph{before} the reduction in \Cref{lem:gradient_gaussian_reduction} to the Haar-random matrices). But now, since
\begin{equation}
\begin{aligned}
& \brackets{\widetilde{\bs{g}}'^\lambda \widetilde{\bs{g}}_i^\lambda \widetilde{\bs{A}}^\lambda \widetilde{\bs{g}}_i^{\lambda \dagger} \widetilde{\bs{g}}'^{\lambda \dagger}, \; \widetilde{\bs{g}}'^\lambda \bs{O}^\lambda_x \widetilde{\bs{g}}'^{\lambda \dagger}} = \\
={} & \widetilde{\bs{g}}'^\lambda \widetilde{\bs{g}}_i^\lambda \widetilde{\bs{A}}^\lambda \widetilde{\bs{g}}_i^{\lambda \dagger} \bs{O}^\lambda_x \widetilde{\bs{g}}'^{\lambda \dagger} -
\widetilde{\bs{g}}'^\lambda \bs{O}^\lambda_x \widetilde{\bs{g}}_i^\lambda \widetilde{\bs{A}}^\lambda \widetilde{\bs{g}}_i^{\lambda \dagger} \widetilde{\bs{g}}'^{\lambda \dagger} = \\
={} & \widetilde{\bs{g}}'^\lambda
\brackets{\widetilde{\bs{g}}_i^\lambda \widetilde{\bs{A}} ^\lambda \widetilde{\bs{g}}_i^{\lambda \dagger}, \; \bs{O}^\lambda_x} \widetilde{\bs{g}}'^{\lambda \dagger},
\end{aligned}
\end{equation}
we can absorb $\bs{g}''^\lambda$ into $\bs{g}'^\lambda$ and write
\begin{equation}
\begin{aligned}
& \hat{\ell}_{x; i}^\lambda = \frac{\iunit}{M {N_\lambda^\ast}^2} \bra{x} \widetilde{\bs{g}}'^\lambda \brackets{\widetilde{\bs{g}}_i^\lambda \widetilde{\bs{A}}^\lambda \widetilde{\bs{g}}_i^{\lambda \dagger}, \; \bs{O}^\lambda_x} \widetilde{\bs{g}}'^{\lambda \dagger} \ket{x} = \\
={}& \frac{\iunit}{M {N_\lambda^\ast}^2}
\frac{\Tr((\bs{A}^\lambda)^2)}{\Tr(\bs{A}^\lambda)}
\sum_{\mu=1}^{r_A^\lambda} \bra{x} \widetilde{\bs{g}}'^\lambda \brackets{\widetilde{\bs{g}}_i^\lambda \op{\mu}{\mu} \widetilde{\bs{g}}_i^{\lambda \dagger}, \; \bs{O}^\lambda_x} \widetilde{\bs{g}}'^{\lambda \dagger} \ket{x}
\end{aligned}
\end{equation}

Now, we can take a matrix $\bs{X}^\lambda$ of dimension $N_\lambda^\ast \times (M_\lambda + p r_A^\lambda)$ with i.i.d. Gaussian entries, such that
\begin{equation}
\bs{X}^\lambda \ket{x} = \widetilde{\bs{g}}'^{\lambda \dagger} \ket{x}
\end{equation}
and
\begin{equation}
\bs{X}^\lambda \ket{i, \mu} = \widetilde{\bs{g}}_i^\lambda \ket{\mu}.
\end{equation}
Note that we are using $\{x\}$ and $\{(i, \mu)\}$ to label the $M_\lambda + p r_A^\lambda$ columns of $\bs{X}^\lambda$.

Now, we can write
\begin{equation}
\begin{aligned}
& \hat{\ell}_{x; i}^\lambda = \frac{\iunit \Tr((\bs{A}^\lambda)^2)}{M {N_\lambda^\ast}^2 \Tr(\bs{A}^\lambda)}
\sum_{\mu=1}^{r_A^\lambda}
\bra{x} \bs{X}^{\lambda \dagger} \brackets{\bs{X}^\lambda \op{i, \mu}{i, \mu} \bs{X}^{\lambda \dagger}, \bs{O}_x^\lambda} \bs{X}^\lambda \ket{x} = \\
={}& \frac{\iunit \Tr((\bs{A}^\lambda)^2)}{M {N_\lambda^\ast}^2 \Tr(\bs{A}^\lambda)}
\sum_{\mu=1}^{r_A^\lambda}
\Big(\bra{x}^\lambda \bs{X}^{\lambda \dagger} \bs{X}^\lambda \ket{i, \mu} \bra{i, \mu} \bs{X}^{\lambda \dagger} \bs{O}_x^\lambda \bs{X}^\lambda \ket{x} - \\
& \hspace{8em} - \bra{x} \bs{X}^{\lambda \dagger} \bs{O}_x^\lambda \bs{X}^\lambda \ket{i, \mu} \bra{i, \mu} \bs{X}^{\lambda \dagger} \bs{X}^\lambda \ket{x}\Big) = \\
={}& -\frac{\iunit \Tr((\bs{A}^\lambda)^2)}{M {N_\lambda^\ast}^2 \Tr(\bs{A}^\lambda)}
\sum_{\mu=1}^{r_A^\lambda}
\parens{\bra{x} \bs{W}^\lambda \ket{i, \mu} \bra{i, \mu} \bs{W}^{\lambda,x} \ket{x} -
\bra{x} \bs{W}^{\lambda,x} \ket{i, \mu} \bra{i, \mu} \bs{W}^\lambda \ket{x}} = \\
={}& -\frac{\iunit \Tr((\bs{A}^\lambda)^2)}{M {N_\lambda^\ast}^2 \Tr(\bs{A}^\lambda)} \sum_{\mu=1}^{r_A^\lambda} \sum_{y=1}^{M_\lambda + 1} \Big(\bra{x} \bs{W}^{\lambda,y} \ket{i, \mu} \bra{i, \mu} \bs{W}^{\lambda,x} \ket{x} - \\
& \hspace{8em} - \bra{x} \bs{W}^{\lambda,x} \ket{i,\mu} \bra{i,\mu} \bs{W}^{\lambda,y} \ket{x}\Big) = \\
={}& -\frac{\iunit \Tr((\bs{A}^\lambda)^2)}{M {N_\lambda^\ast}^2 \Tr(\bs{A}^\lambda)} \sum_{\mu=1}^{r_A^\lambda} \sum_{y=1}^{M_\lambda + 1} (-2 i) \Im\parens{\bra{x} \bs{W}^{\lambda,x} \ket{i, \mu} \bra{i, \mu} \bs{W}^{\lambda,y} \ket{x}} = \\
={}& -\frac{2 \Tr((\bs{A}^\lambda)^2)}{M {N_\lambda^\ast}^2 \Tr(\bs{A}^\lambda)} \sum_{\mu=1}^{r_A^\lambda} \sum_{y=1}^{M_\lambda + 1} \Im\parens{\bra{x} \bs{W}^{\lambda,x} \ket{i, \mu} \bra{i, \mu} \bs{W}^{\lambda,y} \ket{x}}.
\end{aligned}
\end{equation}

Here, we define $\bs{W}^{\lambda,x} = -\bs{X}^{\lambda \dagger} \bs{O}_x^\lambda \bs{X}^\lambda$ for $1 \leq x \leq M_\lambda$ (recall tthat $\bs{O}_x^\lambda$ is negative semidefinite), and we define
\begin{equation}
\bs{W}^{\lambda, M_\lambda + 1} = \bs{W}^\lambda - \sum_{x=1}^{M_\lambda} \bs{W}^{\lambda,x}
\end{equation}
and $\bs{W}^\lambda = \bs{X}^{\lambda \dagger} \bs{X}^{\lambda}$.

Now, we can write this in terms of the entries of $\bs{X}^{\lambda,x}$ which we take to be an $r_x \times (M_\lambda + p r_A^\lambda)$ submatrix of $\bs{X}$. Here for $1 \leq x \leq M_\lambda$ we have $r_x = \rank(\bs{O}_x^\lambda) = N_\lambda$, and $r_{M_\lambda + 1} = N_\lambda^\ast - M_\lambda N_\lambda$.

\begin{equation}
\hat{\ell}_{x; i}^\lambda = -\frac{2 \Tr((\bs{A}^\lambda)^2)}{M {N_\lambda^\ast}^2 \Tr(\bs{A}^\lambda)} \sum_{\mu=1}^{r_A^\lambda} \sum_{y \neq x} \Im\parens{\sum_{j=1}^{r_x} \parens{X^{\lambda,x}_{j,x}}^\ast X^{\lambda,x}_{j,(i,\mu)} \sum_{k=1}^{r_y} \parens{X^{\lambda,y}_{k,(i,\mu)}}^\ast X^{\lambda,y}_{k,x}}.
\end{equation}

Now, observe that for $\lambda \neq \lambda'$ or $x \neq x'$, we must have by a symmetry argument that
\begin{equation}
\EE\brackets{(\hat{\ell}_{x; i}^\lambda) (\hat{\ell}_{x'; i}^{\lambda'})} = 0
\end{equation}
since each term in the expansion of this will have a factor that is a Gaussian random variable independent from the rest. Also recall that this is a valid approximation for the corresponding term in the actual loss since we ensured in our reductions that the joint distribution of two contributions such as this converges in distribution to our approximation.

Now,
\begin{equation}
\begin{aligned}
& \EE\brackets{(\hat{\ell}_{x; i})^2} = \\
={} & \parens{\frac{2 \Tr((\bs{A}^\lambda)^2)}{M {N_\lambda^\ast}^2 \Tr(\bs{A}^\lambda)}}^2 \sum_{\mu=1}^{r_A^\lambda} \sum_{\substack{y = 1 \\ y \neq x}}^{M_\lambda + 1}
\EE\brackets{\Im\parens{\sum_{j=1}^{r_x} \parens{X^{\lambda,x}_{j,x}}^\ast X^{\lambda,x}_{j,(i,\mu)} \sum_{k=1}^{r_y} \parens{X^{\lambda,y}_{k,(i,\mu)}}^\ast X^{\lambda,y}_{k,x}}^2} = \\
={}& \frac{1}{2} \parens{\frac{2 \Tr((\bs{A}^\lambda)^2)}{M {N_\lambda^\ast}^2 \Tr(\bs{A}^\lambda)}}^2 \sum_{\mu=1}^{r_A^\lambda} \sum_{\substack{y = 1 \\ y \neq x}}^{M_\lambda + 1}
\EE\brackets{\abs{\sum_{j=1}^{r_x} \parens{X^{\lambda,x}_{j,x}}^\ast X^{\lambda,x}_{j,(i,\mu)} \sum_{k=1}^{r_y} \parens{X^{\lambda,y}_{k,(i,\mu)}}^\ast X^{\lambda,y}_{k,x}}^2}.
\end{aligned}
\end{equation}
Note that all cross terms in the above calculation get eliminated due to symmetry in the Gaussian distributions of the unmatched matrix elements.

Now, observe that
\begin{equation}
\begin{aligned}
& \EE\brackets{\abs{\sum_{j=1}^{r_x} \parens{X^{\lambda,x}_{j,x}}^\ast X^{\lambda,x}_{j,(i,\mu)} \sum_{k=1}^{r_y} \parens{X^{\lambda,y}_{k,(i,\mu)}}^\ast X^{\lambda,y}_{k,x}}^2} = \\
={}& \EE\brackets{\abs{\sum_{j=1}^{r_x} \parens{X^{\lambda,x}_{j,x}}^\ast X^{\lambda,x}_{j,(i,\mu)}}^2}
\EE\brackets{\abs{\sum_{k=1}^{r_y} \parens{X^{\lambda,y}_{k,(i,\mu)}}^\ast X^{\lambda,y}_{k,x}}^2} = \\
={}& \EE\brackets{\sum_{j=1}^{r_x} \abs{X^{x,\lambda}_{j,x}}^2 \abs{X^{\lambda,x}_{j,(i,\mu)}}^2}
\EE\brackets{\sum_{k=1}^{r_y} \abs{X^{\lambda,y}_{k,(i,\mu)}}^2 \abs{X^{\lambda,y}_{k,x}}^2} =
r_x r_y = N_\lambda r_y.
\end{aligned}
\end{equation}

Then, since
\begin{equation}
\sum_{\substack{y = 1 \\ y \neq x}}^{N_\lambda + 1} r_y = N_\lambda^\ast - N_\lambda = N_\lambda (N_a - 1),
\end{equation}
we have that
\begin{equation}
\begin{aligned}
& \EE\brackets{(\hat{\ell}_{x; i})^2} =
\frac{1}{2} \parens{\frac{2 \Tr((\bs{A}^\lambda)^2)}{M {N_\lambda^\ast}^2 \Tr(\bs{A}^\lambda)}}^2 r_A^\lambda (N_a - 1) N_\lambda^2 = \\
={}& \frac{2 \Tr((\bs{A}^\lambda)^2)^2}{M^2 {N_\lambda^\ast}^4 \Tr(\bs{A}^\lambda)^2} \frac{\Tr(\bs{A}^\lambda)^2}{\Tr((\bs{A}^\lambda)^2)} (N_a - 1) N_\lambda^2 =
\frac{2 (N_a - 1) \Tr((\bs{A}^\lambda)^2)}{M^2 N_a^4 N_\lambda^2}
\end{aligned}
\end{equation}

Thus,
\begin{equation}
\begin{aligned}
& \EE\brackets{\norm{\widehat{\nabla \ell}}^2} =
\sum_{i=1}^p \EE\brackets{(\hat{\ell}_{i})^2} =
\sum_{i=1}^p \sum_{\lambda=1}^{\Lambda'} \sum_{x=1}^{M_\lambda} \EE\brackets{(\hat{\ell}_{x; i})^2} = \\
={} & \sum_{\lambda=1}^{\Lambda'} \frac{2 M_\lambda (N_a - 1) \Tr((\bs{A}^\lambda)^2) p}{M^2 N_a^4 N_\lambda^2} = \Omega\parens{\frac{1}{\poly(L)}}.
\end{aligned}
\end{equation}
\end{proof}

\restateHessianCalculation*

\begin{proof}
We can write the second derivative of the loss function as:

\begin{equation}
\begin{aligned}
& \partial_i \partial_j \ell(\bs{\theta}) =
\frac{1}{M} \sum_{\lambda=1}^{\Lambda'} \sum_{x=1}^{M_\lambda}
\frac{\partial}{\partial \theta_i} \frac{\partial}{\partial \theta_j} \bra{x} \bs{U}^\lambda(\bs{\theta}) \bs{O}^\lambda_x \bs{U}^\lambda(\bs{\theta})^\dagger \ket{x} = \\
={}& -\frac{1}{M} \sum_{\lambda=1}^{\Lambda'} \sum_{x=1}^{M_\lambda}
\bra{x} e^{\iunit \bs{H}^\lambda t''} \Bigg( \parens{\prod_{k=1}^p e^{-\iunit \bs{H}^\lambda t_k} e^{\theta_k \bs{A}^\lambda} \parens{\bs{A}^\lambda}^{\delta_{ik} + \delta_{jk}} e^{\iunit \bs{H}^\lambda t_k}}
e^{-\iunit \bs{H}^\lambda t'} \bs{O}^\lambda_x e^{\iunit \bs{H}^\lambda t'} \cdot \\
& \hspace{11em} \cdot \parens{\prod_{k=p}^1 e^{-\iunit \bs{H}^\lambda t_k} e^{-\iunit \theta_k\bs{A}^\lambda} e^{\iunit \bs{H}^\lambda t_k}} - \\
& \qquad \qquad - \parens{\prod_{k=1}^p e^{-\iunit \bs{H}^\lambda t_k} e^{\theta_k \bs{A}^\lambda} \parens{\bs{A}^\lambda}^{\delta_{ik}} e^{\iunit \bs{H}^\lambda t_k}}
e^{-\iunit \bs{H}^\lambda t'} \bs{O}^\lambda_x e^{\iunit \bs{H}^\lambda t'} \cdot \\
& \hspace{11em} \cdot \parens{\prod_{k=p}^1 e^{-\iunit \bs{H}^\lambda t_k} e^{-\iunit \theta_k\bs{A}^\lambda} \parens{\bs{A}^\lambda}^{\delta_{jk}} e^{\iunit \bs{H}^\lambda t_k}} - \\
& \qquad \qquad - \parens{\prod_{k=1}^p e^{-\iunit \bs{H}^\lambda t_k} e^{\theta_k \bs{A}^\lambda} \parens{\bs{A}^\lambda}^{\delta_{jk}} e^{\iunit \bs{H}^\lambda t_k}}
e^{-\iunit \bs{H}^\lambda t'} \bs{O}^\lambda_x e^{\iunit \bs{H}^\lambda t'} \cdot \\
& \hspace{11em} \cdot \parens{\prod_{k=p}^1 e^{-\iunit \bs{H}^\lambda t_k} e^{-\iunit \theta_k\bs{A}^\lambda} \parens{\bs{A}^\lambda}^{\delta_{ik}} e^{\iunit \bs{H}^\lambda t_k}} + \\
& \qquad \qquad + \parens{\prod_{k=1}^p e^{-\iunit \bs{H}^\lambda t_k} e^{\theta_k \bs{A}^\lambda} e^{\iunit \bs{H}^\lambda t_k}}
e^{-\iunit \bs{H}^\lambda t'} \bs{O}^\lambda_x e^{\iunit \bs{H}^\lambda t'} \cdot \\
& \hspace{11em} \cdot \parens{\prod_{k=p}^1 e^{-\iunit \bs{H}^\lambda t_k} e^{-\iunit \theta_k\bs{A}^\lambda} \parens{\bs{A}^\lambda}^{\delta_{ik} + \delta_{jk}} e^{\iunit \bs{H}^\lambda t_k}} \Bigg) e^{-\iunit \bs{H}^\lambda t''} \ket{x}.
\end{aligned}
\end{equation}

Now, for every $i$, let
\begin{equation}
\bs{V}_i = \prod_{k=1}^i e^{-\iunit \bs{H}^\lambda t_k} e^{\theta_k \bs{A}^\lambda} e^{\iunit \bs{H}^\lambda t_k}.
\end{equation}
Then, assuming $i \geq j$ (without loss of generality, since $\partial_i \partial_j \ell(\bs{\theta}) = \partial_j \partial_i \ell(\bs{\theta})$), we can rewrite the above as follows:
\begin{equation}
\begin{aligned}
& \partial_i \partial_j \ell(\bs{\theta}) =
-\frac{1}{M} \sum_{\lambda=1}^{\Lambda'} \sum_{x=1}^{M_\lambda}
\bra{x} e^{\iunit \bs{H}^\lambda t''} \cdot \\
& \qquad \qquad \cdot \Big(
\bs{V}_j e^{-\iunit \bs{H}^\lambda t_j} \bs{A}^\lambda e^{\iunit \bs{H}^\lambda t_j} \bs{V}_j^\dagger
\bs{V}_i e^{-\iunit \bs{H}^\lambda t_i} \bs{A}^\lambda e^{\iunit \bs{H}^\lambda t_i} \bs{V}_i^\dagger \bs{V}_p
e^{-\iunit \bs{H}^\lambda t'} \bs{O}^\lambda_x e^{\iunit \bs{H}^\lambda t'}
\bs{V}_p^\dagger - \\
& \qquad \qquad - \bs{V}_i e^{-\iunit \bs{H}^\lambda t_i} \bs{A}^\lambda e^{\iunit \bs{H}^\lambda t_i} \bs{V}_i^\dagger \bs{V}_p
e^{-\iunit \bs{H}^\lambda t'} \bs{O}^\lambda_x e^{\iunit \bs{H}^\lambda t'}
\bs{V}_p^\dagger \bs{V}_j e^{-\iunit \bs{H}^\lambda t_j} \bs{A}^\lambda e^{\iunit \bs{H}^\lambda t_j} \bs{V}_j^\dagger - \\
& \qquad \qquad - \bs{V}_j e^{-\iunit \bs{H}^\lambda t_j} \bs{A}^\lambda e^{\iunit \bs{H}^\lambda t_j} \bs{V}_j^\dagger \bs{V}_p
e^{-\iunit \bs{H}^\lambda t'} \bs{O}^\lambda_x e^{\iunit \bs{H}^\lambda t'}
\bs{V}_p^\dagger \bs{V}_i e^{-\iunit \bs{H}^\lambda t_i} \bs{A}^\lambda e^{\iunit \bs{H}^\lambda t_i} \bs{V}_i^\dagger + \\
& \qquad \qquad + \bs{V}_p
e^{-\iunit \bs{H}^\lambda t'} \bs{O}^\lambda_x e^{\iunit \bs{H}^\lambda t'}
\bs{V}_p^\dagger \bs{V}_i e^{-\iunit \bs{H}^\lambda t_i} \bs{A}^\lambda e^{\iunit \bs{H}^\lambda t_i} \bs{V}_i^\dagger \bs{V}_j e^{-\iunit \bs{H}^\lambda t_j} \bs{A}^\lambda e^{\iunit \bs{H}^\lambda t_j} \bs{V}_j^\dagger \Big) \cdot \\
& \qquad \qquad \cdot e^{-\iunit \bs{H}^\lambda t''} \ket{x} = \\
={}& -\frac{1}{M} \sum_{\lambda=1}^{\Lambda'} \sum_{x=1}^{M_\lambda}
\bra{x} e^{\iunit \bs{H}^\lambda t''} \cdot \\
& \quad \cdot \brackets{
\bs{V}_j e^{-\iunit \bs{H}^\lambda t_j} \bs{A}^\lambda e^{\iunit \bs{H}^\lambda t_j} \bs{V}_j^\dagger, \;
\brackets{
\bs{V}_i e^{-\iunit \bs{H}^\lambda t_i} \bs{A}^\lambda e^{\iunit \bs{H}^\lambda t_i} \bs{V}_i^\dagger, \;
\bs{V}_p e^{-\iunit \bs{H}^\lambda t'} \bs{O}^\lambda_x e^{\iunit \bs{H}^\lambda t'} \bs{V}_p^\dagger
}} \cdot \\
& \quad \cdot e^{-\iunit \bs{H}^\lambda t''} \ket{x}.
\end{aligned}
\end{equation}

Now, analogously to \Cref{thm:gradient_calculation}, we let $\bs{X}^\lambda$ be a matrix of dimension $N_\lambda^\ast \times (M_\lambda + p N_\lambda^\ast)$ (note that we are no longer doing the semi-isotropic reduction on $\bs{A}$) such that:
\begin{equation}
\bs{X}^\lambda \ket{x} = e^{-\iunit \bs{H}^\lambda t''} \ket{x}
\end{equation}
and
\begin{equation}
\bs{X}^\lambda \ket{i, \mu} = \bs{V}_i e^{-\iunit \bs{H}^\lambda t_i} \ket{\mu}.
\end{equation}
Let $\widetilde{\bs{O}}^\lambda_x = \bs{V}_p e^{-\iunit \bs{H}^\lambda t'} \bs{O}^\lambda_x e^{\iunit \bs{H}^\lambda t'} \bs{V}_p^\dagger$, $\bs{W}^\lambda = \bs{X}^{\lambda \dagger} \bs{X}^\lambda$, and $\bs{W}^{\lambda,x} = -\bs{X}^{\lambda \dagger} \widetilde{\bs{O}}^\lambda_x \bs{X}^\lambda$. Then, we can write:

\begin{equation}
\begin{aligned}
& \partial_i \partial_j \ell(\bs{\theta}) = \\
={} & -\frac{1}{M} \sum_{\lambda=1}^{\Lambda'} \sum_{x=1}^{M_\lambda} \sum_{\mu=1}^{N_\lambda^\ast} \sum_{\nu=1}^{N_\lambda^\ast} \\
& \hspace{8em} a^\lambda_\mu a^\lambda_\nu
\bra{x} \bs{X}^{\lambda \dagger} \brackets{\bs{X}^\lambda \ket{j,\nu}\bra{j,\nu}\bs{X}^{\lambda \dagger}, \brackets{\bs{X}^\lambda \ket{i,\mu}\bra{i,\mu} \bs{X}^{\lambda \dagger}, \widetilde{\bs{O}}^\lambda_x}} \bs{X}^\lambda \ket{x} = \\
={}& \frac{1}{M} \sum_{\lambda=1}^{\Lambda'} \sum_{x=1}^{M_\lambda} \sum_{\mu=1}^{N_\lambda^\ast} \sum_{\nu=1}^{N_\lambda^\ast}
a^\lambda_\mu a^\lambda_\nu
\left(\bra{x} \bs{W}^\lambda \ket{j, \nu} \bra{j, \nu} \bs{W}^\lambda \ket{i, \mu} \bra{i, \mu} \bs{W}^{\lambda,x} \ket{x} - \right. \\
& \hspace{10em} - \bra{x} \bs{W}^\lambda \ket{j, \nu} \bra{j, \nu} \bs{W}^{\lambda,x} \ket{i, \mu} \bra{i, \mu} \bs{W}^\lambda \ket{x} - \\
& \hspace{10em} - \bra{x} \bs{W}^\lambda \ket{i, \mu} \bra{i, \mu} \bs{W}^{\lambda,x} \ket{j, \nu} \bra{j, \nu} \bs{W}^\lambda \ket{x} + \\
& \hspace{10em} \left. + \bra{x} \bs{W}^{\lambda,x} \ket{i, \mu} \bra{i, \mu} \bs{W}^\lambda \ket{j, \nu} \bra{j, \nu} \bs{W}^\lambda \ket{x} \right) = \\
={}& -\frac{2}{M} \sum_{\lambda=1}^{\Lambda'} \sum_{x=1}^{M_\lambda} \sum_{\mu=1}^{N_\lambda^\ast} \sum_{\nu=1}^{N_\lambda^\ast}
a^\lambda_\mu a^\lambda_\nu
\Re \left(\bra{x} \bs{W}^\lambda \ket{i, \mu} \bra{i, \mu} \bs{W}^{\lambda,x} \ket{j, \nu} \bra{j, \nu} \bs{W}^\lambda \ket{x} - \right. \\
& \hspace{10em} \left. - \bra{x} \bs{W}^{\lambda,x} \ket{i, \mu} \bra{i, \mu} \bs{W}^\lambda \ket{j, \nu} \bra{j, \nu} \bs{W}^\lambda \ket{x} \right) =
\\
={}& -\frac{2}{M} \sum_{\lambda=1}^{\Lambda'} \sum_{x=1}^{M_\lambda} \sum_{\mu=1}^{N_\lambda^\ast} \sum_{\nu=1}^{N_\lambda^\ast}
a^\lambda_\mu a^\lambda_\nu \Re\parens{
\bs{S}^{\lambda,x,\mu,\nu,1}_{i,j} \bs{W}^{\lambda,x,\mu,\nu}_{i,j} - \bs{S}^{\lambda,x,\mu,\nu,2}_{i,j} \bs{W}^{\lambda,\mu,\nu}_{i,j}},
\end{aligned}
\end{equation}
where
\begin{align}
\bs{W}^{\lambda,x,\mu,\nu}_{i,j} &=
\bra{i,\mu} \bs{W}^{\lambda,x} \ket{j,\nu} \\
\bs{W}^{\lambda,\mu,\nu}_{i,j} &=
\bra{i,\mu} \bs{W}^\lambda \ket{j,\nu} \\
\bs{S}^{\lambda,x,\mu,\nu,1}_{i,j} &=
\bra{x} \bs{W}^\lambda \ket{i, \mu} \bra{j, \nu} \bs{W}^\lambda \ket{x}
\\
\bs{S}^{\lambda,x,\mu,\nu,2}_{i,j} &=
\bra{x} \bs{W}^{\lambda,x} \ket{i, \mu} \bra{j, \nu} \bs{W}^\lambda \ket{x}.
\end{align}

So, we can write the Hessian as
\begin{equation}
\hat{\bs{H}} = -\frac{2}{M} \sum_{\lambda=1}^{\Lambda'} \sum_{x=1}^{M_\lambda} \sum_{\mu=1}^{N_\lambda^\ast} \sum_{\nu=1}^{N_\lambda^\ast}
a^\lambda_\mu a^\lambda_\nu
\Re\parens{\bs{S}^{\lambda,x,\mu,\nu,1} \odot \bs{W}^{\lambda,x,\mu,\nu} - \bs{S}^{\lambda,x,\mu,\nu,2} \odot \bs{W}^{\lambda,\mu,\nu}}.
\end{equation}

Now, we know that
\begin{equation}
\rank(\bs{W}^{\lambda,x,\mu,\nu}) \leq
\rank(\bs{W}^{\lambda,\mu,\nu}) \leq
\rank(\bs{W}^\lambda) \leq N_\lambda^\ast.
\end{equation}
Additionally, it is clear that $\bs{S}^{\lambda,x,\mu,\nu,1}$ and $\bs{S}^{\lambda,x,\mu,\nu,2}$ both have rank $1$: their entries are products of a term only depending on $i$ and a term only depending on $j$, so they can be written as outer products of two vectors. Now, it is known that the rank of a Hadamard product of two matrices is upper-bounded by the product of their ranks, which means that
\begin{equation}
\rank{\hat{\bs{H}}} \leq
\sum_{\lambda=1}^{\Lambda'} \sum_{x=1}^{M_\lambda} \sum_{\mu=1}^{N_\lambda^\ast} \sum_{\nu=1}^{N_\lambda^\ast} 2 N_\lambda^\ast =
\sum_{\lambda=1}^{\Lambda'} 2 M_\lambda (N_\lambda^\ast)^3 \leq
2 M {N^\ast}^3 = 2 M N_a^3 N^3.
\end{equation}
Thus, if we choose $p \geq \rank(\hat{\bs{H}})$, it must be the case that $\hat{\bs{H}}$, which is a $p \times p$ matrix, does not have full rank.
\end{proof}

\section{Details on the L\'evy-Prokhorov Metric and Error Bounds}\label{app:lp_details}

\begin{definition}[L\'evy-Prokhorov Metric]\label{def:levy_prokhorov}
Consider two probability measures $F$, $G$ over $\RR^d$, we define the L\'evy-Prokhorov distance between $F$ and $G$ relative to a norm $\norm{\cdot}$ on $\RR^d$ as
\begin{equation}
\pi(F, G) = \inf\{\epsilon > 0 : F(A) < G(A^\epsilon) + \epsilon \; \forall \; A\},
\end{equation}
where $A$ ranges over the Borel sets in $\RR^d$ and $A^\epsilon$ is an $\epsilon$-neighborhood of $A$ in the given norm.
\end{definition}

\begin{lemma}\label{lem:lp_symmetric}
$\pi(F, G) = \pi(G, F)$.
\end{lemma}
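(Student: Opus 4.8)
The plan is to reduce the statement to the standard fact that the one-sided inequality appearing in \Cref{def:levy_prokhorov} is secretly two-sided. First I would fix $\epsilon > 0$ and suppose that $F(A) < G(A^\epsilon) + \epsilon$ holds for every Borel set $A$, adopting the convention that $A^\epsilon = \{x : \inf_{a \in A}\norm{x - a} < \epsilon\}$ is the \emph{open} $\epsilon$-neighborhood. This choice is convenient because $A^\epsilon$ is then open (hence Borel) for an arbitrary set $A$, and its complement $(A^\epsilon)^c$ is closed, hence also Borel, so that $F$ and $G$ may legitimately be evaluated on all the sets that appear. The goal of this step is to deduce that $G(B) < F(B^\epsilon) + \epsilon$ for every Borel $B$, i.e.\ that the defining condition is symmetric in $F$ and $G$ for each fixed $\epsilon$.

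The key move is a complementation trick. Given a Borel set $B$, I would apply the hypothesis to $A := (B^\epsilon)^c$. The geometric claim is that $A^\epsilon \subseteq B^c$: if $x \in A^\epsilon$ then there is $a \in A$ with $\norm{x - a} < \epsilon$, and membership $a \in A = (B^\epsilon)^c$ forces $\inf_{b \in B}\norm{a - b} \ge \epsilon$; if $x$ were in $B$ this would give $\norm{a - x} \ge \epsilon$, contradicting $\norm{x-a}<\epsilon$, so $x \notin B$. Combining the hypothesis with this inclusion and monotonicity of $G$ yields $F\big((B^\epsilon)^c\big) < G(A^\epsilon) + \epsilon \le G(B^c) + \epsilon$, and rewriting via $F\big((B^\epsilon)^c\big) = 1 - F(B^\epsilon)$ and $G(B^c) = 1 - G(B)$ gives exactly $G(B) < F(B^\epsilon) + \epsilon$.

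Finally I would package this as a statement about the sets over which the infimum is taken. Writing $S(F,G) = \{\epsilon > 0 : F(A) < G(A^\epsilon) + \epsilon \text{ for all Borel } A\}$, the previous paragraph shows $S(F,G) \subseteq S(G,F)$; running the identical argument with the roles of $F$ and $G$ exchanged gives the reverse inclusion, so $S(F,G) = S(G,F)$, and hence $\pi(F,G) = \inf S(F,G) = \inf S(G,F) = \pi(G,F)$.

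I expect the main obstacle here to be bookkeeping rather than depth: one must keep the meaning of "$\epsilon$-neighborhood" consistent throughout (the open-neighborhood convention is what makes the inclusion $A^\epsilon \subseteq B^c$ clean), verify that every set on which $F$ or $G$ is evaluated is measurable, and track the strict-versus-non-strict inequalities carefully across the complementation step. With the conventions above all of these reduce to routine checks.
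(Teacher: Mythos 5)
Your argument is correct and is in essence the paper's own proof, repackaged: the paper's route passes through the auxiliary set $A^{-\epsilon}$ and the identity $(A^c)^\epsilon = (A^{-\epsilon})^c$ together with the inclusion $A \subseteq (A^\epsilon)^{-\epsilon}$, and your single inclusion $\big((B^\epsilon)^c\big)^\epsilon \subseteq B^c$ is exactly the complement of that last inclusion. Your version is slightly tidier---it avoids introducing $A^{-\epsilon}$ at all and is explicit about the open-neighborhood convention and measurability of the sets involved---but the underlying idea (apply the defining inequality to a complemented neighborhood and take complements) is the same.
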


\begin{proof}
Suppose that for all Borel sets $A$, we have $F(A) < G(A^\epsilon) + \epsilon$. Then,
\begin{equation}
F(A^c) < G((A^c)^\epsilon) + \epsilon.
\end{equation}
For $\epsilon > 0$, define $A^{-\epsilon}$ to be the set of all points $x$ such that the open $\epsilon$-ball centered at $x$ is contained in $A$. Then, observe that
\begin{equation}
\ol{(A^c)^\epsilon} = \ol{(A^{-\epsilon})^c},
\end{equation}
which means that we can write
\begin{align}
1 - F(A) &< 1 - G(A^{-\epsilon}) + \epsilon \\
G(A^{-\epsilon}) &< F(A) + \epsilon
\end{align}
Now, since for all $A$, we have $A \subseteq (A^\epsilon)^{-\epsilon}$, so we can substitute
\begin{equation}
G(A) \leq G((A^\epsilon)^{-\epsilon})  < F(A^\epsilon) + \epsilon.
\end{equation}
Thus, $\pi(F, G) = \pi(G, F)$.
\end{proof}

\begin{definition}[Convolution]\label{def:convolution}
If $\varphi, \gamma$ are the probability densities of the distributions $F, G$, then the probability density of the convolution $F \ast G$ is defined to be
\begin{equation}
\rho(t) = \int_{\RR^d} \varphi(x) \gamma(t - x) \diff x.
\end{equation}
\end{definition}

\begin{definition}[Characteristic Function]\label{def:char_func}
The characteristic function of a distribution $F$ with density $\varphi$ is defined as
\begin{equation}
f(t) = \EE[e^{\iunit t X}] = \int_\RR e^{\iunit t x} \varphi(x) \diff x,
\end{equation}
where in the above $X \sim F$ is a random variable.
\end{definition}

\begin{lemma}\label{lem:berkes_philipp_helper}
Let $F$, $G$, and $H$ be probability distributions on $\RR^d$, and let $r > 0$. Then, the following inequality holds:
\begin{equation}
\pi(F, G) \leq \pi(F \ast H, G \ast H) + 2 \max\{r, H(\norm{x} \geq r)\},
\end{equation}
\end{lemma}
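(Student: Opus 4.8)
\noindent The plan is to control the Lévy–Prokhorov distance between $F$ and $G$ by inserting the convolution with $H$ and then paying a correction term that measures how far $H$ is from a point mass at the origin. The starting point is a Borel set $A\subseteq\RR^d$; I would first observe that, because $H$ concentrates most of its mass in a ball of radius $r$, the measure $F(A)$ cannot differ too much from $(F\ast H)(A^r)$. Concretely, writing $F(A)=\int (F\ast H)\!\text{-stuff}$ is awkward, so instead I would argue directly: for any $x$ with $\norm{x}<r$, the translate $A-x$ is contained in $A^r$, hence
\begin{equation}
(F\ast H)(A^r)=\int_{\RR^d} F(A^r-y)\,\diff H(y)\ \geq\ \int_{\norm{y}<r} F(A)\,\diff H(y)\ -\ \text{(error)},
\end{equation}
which after rearranging gives $F(A)\leq (F\ast H)(A^r)+H(\norm{y}\geq r)$. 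The symmetric inequality in the other direction (expressing $(F\ast H)(B)$ in terms of $F(B^r)$) is proved the same way and yields the companion bound $(F\ast H)(B)\leq F(B^r)+H(\norm{y}\geq r)$ for all Borel $B$.

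\noindent With these two "transfer" inequalities in hand, the second step is a routine chaining argument. Let $\delta=\pi(F\ast H,\,G\ast H)$, so that $(F\ast H)(C)<(G\ast H)(C^\delta)+\delta$ for every Borel $C$. Given an arbitrary Borel set $A$, apply the first transfer inequality with the set $A$, then the $\delta$-bound with $C=A^r$, then the second transfer inequality with $B=(A^r)^\delta=A^{r+\delta}$ (using that an $\varepsilon$-neighborhood of an $\varepsilon'$-neighborhood is contained in an $(\varepsilon+\varepsilon')$-neighborhood), and finally note $A^{r+\delta+r}=A^{2r+\delta}$. Collecting the additive losses gives
\begin{equation}
F(A)\ <\ G\!\left(A^{\,\delta+2r}\right)+\delta+2\,H(\norm{y}\geq r),
\end{equation}
and hence $\pi(F,G)\leq \delta+2r+2H(\norm{y}\geq r)\leq \delta+2\max\{r,\,H(\norm{y}\geq r)\}+2r$. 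To recover the stated bound with the single factor $2\max\{r,H(\norm{x}\ge r)\}$ I would instead keep the bookkeeping tighter: the $r$-inflations of the set and the mass-outside-the-ball terms are the \emph{same} source of error, so I would track a single quantity $\eta:=\max\{r,H(\norm{x}\ge r)\}$ and observe that each transfer step costs at most $\eta$ in the measure and at most $\eta$ in the neighborhood radius, for a combined total of $2\eta$ in each. Symmetry of $\pi$ (\Cref{lem:lp_symmetric}) lets me assume the roles of $F$ and $G$ without loss of generality.

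\noindent The main obstacle is getting the constant exactly right: a naive argument produces $\delta + 2r + 2H(\norm{x}\ge r)$, whereas the claim asks for $\delta + 2\max\{r,H(\norm{x}\ge r)\}$. Closing this gap requires noticing that one need not simultaneously inflate the set by $r$ \emph{and} discard mass $H(\norm{x}\ge r)$ as independent penalties—one can instead split the ambient space into the ball $\{\norm{x}<r\}$ and its complement and handle the "near" part with a radius inflation and the "far" part with a mass deficit, so that in each of the two transfer steps the total cost is $\max\{r,H(\norm{x}\ge r)\}$ rather than the sum. Everything else—the neighborhood arithmetic $A^\varepsilon{}^{\varepsilon'}\subseteq A^{\varepsilon+\varepsilon'}$, the integral manipulations for the convolution, the appeal to \Cref{lem:lp_symmetric}—is standard and I would not belabor it.
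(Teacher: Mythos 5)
Your overall plan coincides with the paper's: establish the two ``transfer'' inequalities $F(A)\leq (F\ast H)(A^r)+H(\norm{x}\geq r)$ and $(F\ast H)(B)\leq F(B^r)+H(\norm{x}\geq r)$, then chain them through $\pi(F\ast H, G\ast H)$. The chain you write down,
\begin{equation}
F(A) < G\bigl(A^{2r+\delta}\bigr) + \delta + 2H(\norm{x}\geq r),
\end{equation}
is exactly the paper's intermediate bound. The obstacle you then worry about is not real, and the ``split the ambient space'' repair you gesture at is neither needed nor actually specified. The point you're missing is how the L\'evy--Prokhorov infimum reads off a bound from an inequality of the form $F(A) < G(A^{a}) + b$ with $a\neq b$: by \Cref{def:levy_prokhorov} one needs a \emph{single} $\epsilon$ with $F(A) < G(A^\epsilon)+\epsilon$, and since $A^{a}\subseteq A^{\epsilon}$ whenever $a\leq\epsilon$, any $\epsilon\geq\max\{a,b\}$ works. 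Here $a = \delta + 2r$ and $b = \delta + 2H(\norm{x}\geq r)$, so the chain already yields $\pi(F,G)\leq\delta + 2\max\{r,\,H(\norm{x}\geq r)\}$ with no further bookkeeping; the sum $\delta + 2r + 2H(\norm{x}\geq r)$ you quote as ``the naive bound'' is simply a misreading of the definition. (This is precisely the paper's final step: it bounds both the inflation radius and the mass deficit by $\delta + 2\max\{r, H(\norm{x}\geq r)\}$ and takes that as the LP distance.)

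One smaller point: in your sketch of the first transfer inequality the ``$-\text{(error)}$'' term is spurious. You have $(F\ast H)(A^r)\geq\int_{\norm{y}<r} F(A^r - y)\,\diff H(y)\geq F(A)\,H(\norm{y}<r)$ because $A\subseteq A^r - y$ when $\norm{y}<r$, and then $F(A)\,H(\norm{y}<r) = F(A)\bigl(1 - H(\norm{y}\geq r)\bigr)\geq F(A) - H(\norm{y}\geq r)$ since $F(A)\leq 1$; no correction term appears. The paper establishes the same two inequalities by a short case analysis on the events $\{Y\in K\}$ and $\{Y\notin K\}$, which is logically equivalent to your integration argument. With the LP-readoff fixed, your proof is complete and matches the paper's.
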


\begin{proof}
Let $X \sim F$, $Y \sim H$ be independent random variables. Let $A, K \subseteq \RR^d$. We know that
\begin{equation}
\begin{aligned}
& (F \ast H)(A) = \Pr[X + Y \in A] \geq \\
\geq{}& \Pr[X + k \in A \; \forall k \in K] \Pr[Y \in K] \geq \\
\geq{}& \Pr[X + k \in A \; \forall k \in K] + \Pr[Y \in K] - 1 = \\
={}& F(\{x : x + K \subseteq A\}) + H(K) - 1.
\end{aligned}
\end{equation}
Thus, letting $K$ be the open ball of radius $\epsilon$, we have that
\begin{equation}
(F \ast H)(A^\epsilon) \geq F(A) + H(\norm{x} < \epsilon) - 1 =
F(A) - H(\norm{x} \geq \epsilon).
\end{equation}
Note that we use the shorthand notation $H(\norm{x} < \epsilon)$ to mean $H(\{x : \norm{x} < \epsilon)\}$.

Now, we can also argue that
\begin{equation}
\begin{aligned}
& (F \ast H)(A) = \Pr[X + Y \in A] \leq \\
\leq{}& \Pr[X \in A - K]\Pr[Y \in K] + \Pr[Y \in K^c] \leq \\
\leq{}& \Pr[X \in A - K] + \Pr[Y \in K^c] =
F(A - K) + H(K^c).
\end{aligned}
\end{equation}
This gives us the inequality
\begin{equation}
(F \ast H)(A) \leq F(A^\epsilon) + H(\norm{x} \geq \epsilon).
\end{equation}
The same holds when replacing $F$ with $G$. Now, if we let $L = \pi(F \ast H, G \ast H)$, then we must have by the above and by the definition of the L\'evy-Prokhorov metric that
\begin{equation}
\begin{aligned}
& F(A) \leq (F \ast H)(A^r) + H(\norm{x} \geq r) \leq \\
\leq{}& (G \ast H)(A^{r + L}) + H(\norm{x} \geq r) + L \leq \\
\leq{}& G(A^{2r + L}) + 2H(\norm{x} \geq r) + L \leq \\
\leq{}& G(A^{L + 2\max\{r, H(\norm{x} \geq r)\}}) + 2H(\norm{x} \geq r) + L + 2\max\{r, H(\norm{x} \geq r)\},
\end{aligned}
\end{equation}
which proves that
\begin{equation}
\pi(F, G) \leq \pi(F \ast H, G \ast H) + 2 \max\{r, H(\norm{x} \geq r)\}.
\end{equation}
\end{proof}

\begin{lemma}\label{lem:berkes_philipp}
(This lemma is adapted from Lemma 2.2 of \citet{Berkes_1979}). Let $F$ and $G$ be probability distributions on $\RR^d$ with characteristic functions $f$ and $g$, respectively. Then, for some sufficiently large $T = O(d)$, we have that the L\'evy-Prokhorov metric relative to the Euclidean norm is bounded as
\begin{equation}
\pi_2(F, G) < \parens{\frac{T}{\pi}}^d \int_{\twonorm{u} \leq T} \abs{f(u) - g(u)} \diff u + F(\{x : \twonorm{x} \geq T/2\}) + 16 d \frac{\log T}{T}.
\end{equation}
\end{lemma}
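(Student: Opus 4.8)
The plan is to follow the classical Fourier (``smoothing'') argument underlying such comparison inequalities. I would first introduce a smoothing kernel $H$ on $\RR^d$ that is sharply concentrated at the origin, convolve (\Cref{def:convolution}) both $F$ and $G$ with it, and invoke \Cref{lem:berkes_philipp_helper} with a radius $r>0$ to be fixed later:
\begin{equation}
\pi_2(F,G)\le \pi_2\parens{F\ast H,\ G\ast H}+2\max\braces{r,\ H\parens{\twonorm{x}\ge r}}.
\end{equation}
The kernel is chosen so that its characteristic function $h$ is (essentially) supported in the ball $\twonorm{u}\le T$ with $0\le h\le 1$ and $h(0)=1$; a Gaussian of the appropriate width is the cleanest choice, and a compactly band-limited kernel also works. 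The role of the convolution is that $F\ast H$ and $G\ast H$ have bounded continuous densities, so their Lévy--Prokhorov distance can be controlled through a pointwise bound on the density difference.

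To bound $\pi_2(F\ast H, G\ast H)$ I would use Fourier inversion (\Cref{def:char_func}): the densities $\tilde f,\tilde g$ of $F\ast H,G\ast H$ have the integrable characteristic functions $fh$ and $gh$, so $\tilde f-\tilde g$ is the inverse transform of $(f-g)h$, whence $\abs{\tilde f(x)-\tilde g(x)}\le (2\pi)^{-d}\int_{\twonorm{u}\le T}\abs{f(u)-g(u)}\diff u=:\delta$ uniformly in $x$, using $\abs{h}\le 1$ and the (near-)support of $h$. For an arbitrary Borel set $A$, split over the ball $B_T=\{\twonorm{x}\le T\}$: on $A\cap B_T$ the density difference contributes at most $\mathrm{vol}(B_T)\,\delta\le(2T)^d\delta$, and off $B_T$ at most $\parens{F\ast H}(\twonorm{x}>T)$, which by the triangle inequality is at most $F(\twonorm{x}>T/2)+H(\twonorm{x}>T/2)$. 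Since $(2T)^d\delta=\parens{T/\pi}^d\int_{\twonorm{u}\le T}\abs{f-g}\diff u$ and $\parens{G\ast H}(A)\le\parens{G\ast H}(A^\epsilon)$ for all $\epsilon>0$, this already produces the two explicit terms of the claimed bound.

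It then remains to choose the width of $h$ and the radius $r$ so that all remaining contributions — $H(\twonorm{x}>T/2)$, the term $2\max\{r,H(\twonorm{x}\ge r)\}$ from \Cref{lem:berkes_philipp_helper}, and any error from replacing the full $u$-integral by the truncated one — collapse into a single term of size $O(d\log T/T)$, valid once $T$ exceeds a threshold of order $d$. Because the kernel's spatial tail $H(\twonorm{x}\ge a)$ decays quickly in $a$, taking $r=\Theta(\log T/T)$ forces $H(\twonorm{x}\ge r)\le r$, so $2\max\{r,H(\twonorm{x}\ge r)\}=O(\log T/T)$ and $H(\twonorm{x}>T/2)$ is negligible; propagating the dimension dependence through a coordinatewise union bound produces the constant $16d$.

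The crux — and the point left implicit in Lemma 2.2 of \citet{Berkes_1979} — is exactly this final optimization. One has to balance the width of $h$ (large enough, of order $T$, that $\delta$ captures the whole region $\twonorm{u}\le T$, yet small enough that the spatial tails of $H$ remain light) against $r$, and one needs a kernel that is at once (nearly) band-limited and very sharply concentrated: a plain Fejér kernel, whose spatial decay is only $O(1/x^2)$, is too crude to reach the $\log T/T$ scale, which is why a Gaussian (or comparably concentrated) kernel is the natural vehicle. Converting these heuristics into rigorous inequalities with explicit absolute constants, and pinning down the lower bound $T=O(d)$, is the bulk of the work.
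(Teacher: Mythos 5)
Your proposal follows the paper's argument essentially step for step: smooth $F$ and $G$ with a Gaussian kernel $H$, bound $\|\varphi-\gamma\|_\infty$ by Fourier inversion of $(f-g)h$, split any Borel set over the Euclidean ball of radius $T$ (bounding its volume by $(2T)^d$ to get the factor $(T/\pi)^d$), and apply \Cref{lem:berkes_philipp_helper} to transfer the bound back to $\pi(F,G)$, leaving a final optimization of the kernel width and the radius $r$. The part you defer as ``the bulk of the work'' is exactly what the paper supplies — taking $\sigma = 3d^{1/2}T^{-1}\log^{1/2}T$ and $r = 5dT^{-1}\log T$, computing the Gaussian characteristic function by contour shift, and controlling the Gaussian tails by a Chernoff bound on $\chi^2_d$ to show they are subleading to $r$ — so your sketch is correct and on the same path.
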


\begin{proof}
Let $H$ be a distribution on $\RR^d$ with density $\nu(x)$ and characteristic function $h \in L^1$. Let $F_1 = F \ast H$ and $G_1 = G \ast H$. Let $\varphi$ and $\gamma$ be the probability densities of $F_1$ and $G_1$, respectively. Their characteristic functions are $f_1 = f h$ and $g_1 = g h$. Now, using the definition of convolution and applying the inverse Fourier transform, we have
\begin{equation}
\begin{aligned}
& \abs{\varphi(x) - \gamma(x)} = (2\pi)^{-d} \abs{\int_{\RR^d} e^{-\iunit \angles{u, x}} (f_1(u) - g_1(u)) \diff u} \leq \\
\leq{}& (2\pi)^{-d} \int_{\RR^d} \abs{(f(u) - g(u))}\abs{h(u)} \diff u \leq \\
\leq{}& (2\pi)^{-d} \parens{\int_{\twonorm{u} \leq T} \abs{(f(u) - g(u))} \diff u + 2 \int_{\twonorm{u} \geq T} \abs{h(u)} \diff u},
\end{aligned}
\end{equation}
where $T$ is any real number. Then, for any Borel set $B \in \RR^d$,
\begin{equation}
\begin{aligned}
& F_1(B) - G_1(B) \leq
F_1(B \cap \{\twonorm{x} \leq T\}) - G_1(B \cap \{\twonorm{x} \leq T\}) + F_1(\twonorm{x} \geq T) \leq \\
\leq{}& \int_{\twonorm{x} \leq T} \abs{\varphi(x) - \gamma(x)} \diff x + F(\twonorm{x} \geq T/2) + H(\twonorm{x} \geq T/2) \leq \\
\leq{}& \parens{\frac{T}{\pi}}^d \parens{\int_{\twonorm{u} \leq T} \abs{(f(u) - g(u))} \diff u + 2 \int_{\twonorm{u} \geq T} \abs{h(u)} \diff u} + \\
& + F(\twonorm{x} \geq T/2) + H(\twonorm{x} \geq T/2).
\end{aligned}
\end{equation}
Since for all $B$
\begin{equation}
F_1(B) = G_1(B) + (F_1(B) - G_1(B)) \leq G_1(B^{(F_1(B) - G_1(B))}) + (F_1(B) - G_1(B)),
\end{equation}
we have that
\begin{equation}
\pi(F_1, G_1) \leq \abs{F_1(B) - G_1(B)}.
\end{equation}
We also know from \Cref{lem:berkes_philipp_helper} that for any $r > 0$,
\begin{equation}
\pi(F, G) \leq \pi(F_1, G_1) + 2 \max\{r, H(\twonorm{x} \geq r)\}.
\end{equation}
Now, if we choose $r \leq T/2$, we can put this together to get
\begin{equation}
\begin{aligned}
\pi(F, G) \leq \parens{\frac{T}{\pi}}^d
\parens{\int_{\twonorm{u} \leq T} \abs{(f(u) - g(u))} \diff u + 2 \int_{\twonorm{u} \geq T} \abs{h(u)} \diff u} + \\
+ F(\twonorm{x} \geq T/2) + 3 \max\{r, H(\twonorm{x} \geq r)\}.
\end{aligned}
\end{equation}
Now, let $\sigma = 3 d^{1/2} T^{-1} \log^{1/2} T$ and $r = 5d T^{-1} \log T$. Let $H$ be a distribution with probability density
\begin{equation}
\nu(x) = (2\pi \sigma^2)^{-d/2} \exp\parens{-\frac{1}{2 \sigma^2} \sum_{j=1}^d x_j^2}.
\end{equation}
Then,
\begin{equation}
\begin{aligned}
& h(u) =
\int_{\RR^d} e^{\iunit \angles{u, x}} \nu(x) \diff x = \\
={}& (2\pi \sigma^2)^{-d/2} \prod_{j=1}^d \int_\RR e^{\iunit u_j x - \frac{1}{2 \sigma^2} x^2} \diff x = \\
={}& (2\pi \sigma^2)^{-d/2} \prod_{j=1}^d \int_\RR
e^{-\frac{1}{2 \sigma^2} (x^2 - 2\iunit \sigma^2 u_j x - \sigma^4 u_j^2 + \sigma^4 u_j^2)} \diff x = \\
={}& (2\pi \sigma^2)^{-d/2} \prod_{j=1}^d \int_\RR
e^{-\frac{1}{2 \sigma^2} (x - \iunit \sigma^2 u_j)^2 - \frac{1}{2}\sigma^2 u_j^2} \diff x = \\
={}& (2\pi \sigma^2)^{-d/2} \exp\parens{-\frac{1}{2} \sigma^2 \sum_{j=1}^d u_j^2} \parens{\int_\RR
e^{-\frac{1}{2 \sigma^2} x^2} \diff x}^d = \\
={}& \exp\parens{-\frac{1}{2} \sigma^2 \sum_{j=1}^d u_j^2}.
\end{aligned}
\end{equation}
Note that we are able to remove the $\iunit \sigma^2 u_j$ term in the exponent in the integrand, because the integrand is an entire function and so its integral around a rectangle with one side on the real axis and one side shifted by $-\iunit \sigma^2 u_j$ must be zero. Since the the contribution of the other two sides of the rectangle goes to zero, the value of the integral remains the same when moved to the real axis.

So now, we have that
\begin{equation}
\begin{aligned}
& \pi(F, G) \leq \\
\leq{} & \parens{\frac{T}{\pi}}^d \parens{\int_{\twonorm{u} \leq T} \abs{(f(u) - g(u))} \diff u + 2 \int_{\twonorm{u} \geq T} \exp\parens{-\frac{1}{2} \sigma^2 \sum_{j=1}^d u_j^2} \diff u} + \\
& + F(\twonorm{x} \geq T/2) + 3 \max\braces{r, (2\pi \sigma^2)^{-d/2} \int_{\twonorm{u} \geq r} \exp\parens{-\frac{1}{2 \sigma^2} \sum_{j=1}^d u_j^2} \diff u}.
\end{aligned}
\end{equation}

Now, observe that for any $A$,
\begin{equation}
\begin{aligned}
& \int_{\twonorm{u} \geq A} \exp\parens{-\frac{1}{2} \sum_{j=1}^d u_j^2} \diff u =
(2\pi)^{d/2} \Pr[\chi_d^2 \geq A^2] =
(2\pi)^{d/2} \Pr\brackets{e^{\frac{3}{8}\chi_d^2} \geq e^{\frac{3}{8} A^2}} \leq \\
\leq{}& (2\pi)^{d/2} e^{-3 A^2 / 8} \EE\brackets{e^{\frac{3}{8}\chi_d^2}} =
(2\pi)^{d/2} e^{-3 A^2 / 8} 2^d.
\end{aligned}
\end{equation}

Going back to our previous expression, we can substitute $v_j = \sigma u_j$ for the second integral and $v_j = u_j / \sigma$ for the third, which gives us
\begin{equation}
\begin{aligned}
& \pi(F, G) \leq \parens{\frac{T}{\pi}}^d \parens{\int_{\twonorm{u} \leq T} \abs{(f(u) - g(u))} \diff u + \frac{2}{\sigma^d} \int_{\abs{v} \geq \sigma T} \exp\parens{-\frac{1}{2} \sum_{j=1}^d v_j^2} \diff v} + \\
& \hspace{4em} + F(\twonorm{x} \geq T/2) + 3 \max\braces{r, (2\pi)^{-d/2} \int_{\abs{v} \geq r / \sigma} \exp\parens{-\frac{1}{2} \sum_{j=1}^d v_j^2} \diff v} = \\
={}& \parens{\frac{T}{\pi}}^d \parens{\int_{\twonorm{u} \leq T} \abs{(f(u) - g(u))} \diff u + \frac{2}{\sigma^d} (2\pi)^{d/2} e^{-3 T^2 \sigma^2 / 8} 2^d} + \\
& + F(\twonorm{x} \geq T/2) + 3 \max\braces{r, e^{-\frac{3 r^2}{8 \sigma^2}} 2^d}.
\end{aligned}
\end{equation}

Now, since
\begin{equation}
\frac{3 r^2}{8 \sigma^2} = \frac{3}{8} \frac{25 d^2 T^{-2} \log^2 T}{9 d T^{-2} \log T} =
\frac{25 d \log T}{24},
\end{equation}
so the second term in the maximum becomes
\begin{equation}
T^{-\frac{25 d}{24} 2^{d}}
\end{equation}
which becomes subleading to
\begin{equation}
r = \frac{5 d \log T}{T}
\end{equation}
for large $d$. Also, we have that
\begin{equation}
\frac{2}{\sigma^d} (2\pi)^{d/2} e^{-3 T^2 \sigma^2 / 8} 2^d =
\frac{2^{d+1} (2 \pi)^{d/2}}{\parens{3 d^{1/2} T^{-1} \log^{1/2} T}^d} T^{-\frac{27}{8}d},
\end{equation}
which becomes subleading if $T$ is sufficiently large. Increasing the coefficient for $r$ by one to strictly dominate the subleading terms, we arrive at the final expression:
\begin{equation}
\pi(F, G) \leq \parens{\frac{T}{\pi}}^d \int_{\twonorm{u} \leq T} \abs{(f(u) - g(u))} \diff u + F(\twonorm{x} \geq T/2) + \frac{16 d \log T}{T}.
\end{equation}
\end{proof}

\begin{corollary}\label{cor:modified_berkes_philipp}
Let $F$ and $G$ be two distributions on $\RR^d$ with density functions $\varphi$ and $\gamma$ and characteristic functions $f$ and $g$ respectively. Assume that there exists some $C$ such that
\begin{equation}
\int_{\norm{x}_\infty \geq C} f(x) \diff x \leq \mu.
\end{equation}
Then, there exists a universal constant $K$ such that for all $T \geq \max(2C \sqrt{d}, K d)$, we have that
\begin{equation}
\pi(F, G) \leq \parens{\frac{T}{\pi}}^d \int_{\norm{u}_\infty \leq T} \abs{(f(u) - g(u))} \diff u + \frac{16 d \log T}{T} + \mu.
\end{equation}
\end{corollary}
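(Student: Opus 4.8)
The plan is to obtain this as a direct consequence of \Cref{lem:berkes_philipp} by translating between the Euclidean and $\ell^\infty$ norms and by replacing the crude tail term $F(\{x:\twonorm{x}\ge T/2\})$ appearing there with the hypothesized bound $\mu$. The only inputs beyond \Cref{lem:berkes_philipp} are the elementary norm comparisons on $\RR^d$, namely $\inorm{u}\le\twonorm{u}\le\sqrt{d}\,\inorm{u}$, and monotonicity of measures.

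First I would record the two containments these inequalities give. Since $\{u:\twonorm{u}\le T\}\subseteq\{u:\inorm{u}\le T\}$ and $\abs{f(u)-g(u)}\ge 0$, we get $\int_{\twonorm{u}\le T}\abs{f(u)-g(u)}\diff u\le\int_{\inorm{u}\le T}\abs{f(u)-g(u)}\diff u$. Since $\twonorm{x}\le\sqrt{d}\,\inorm{x}$, we get $\{x:\twonorm{x}\ge T/2\}\subseteq\{x:\inorm{x}\ge T/(2\sqrt{d})\}$; so if in addition $T\ge 2C\sqrt{d}$, then $T/(2\sqrt{d})\ge C$ and hence $\{x:\twonorm{x}\ge T/2\}\subseteq\{x:\inorm{x}\ge C\}$, which by monotonicity and the hypothesis yields $F(\{x:\twonorm{x}\ge T/2\})\le\mu$. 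I would also note the one-line fact that, because the open $\ell^\infty$-ball of radius $\epsilon$ contains the open Euclidean ball of the same radius, every Euclidean $\epsilon$-neighborhood $A^\epsilon$ is contained in the corresponding $\ell^\infty$ $\epsilon$-neighborhood, so that the Lévy--Prokhorov distance relative to $\inorm{\cdot}$ is bounded by the one relative to $\twonorm{\cdot}$; thus it suffices to bound $\pi_2(F,G)$.

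Now I would apply \Cref{lem:berkes_philipp}: the phrase ``$T=O(d)$'' there supplies a universal constant $K$ such that for all $T\ge Kd$,
\[
\pi_2(F,G) < \parens{\frac{T}{\pi}}^d \int_{\twonorm{u}\le T}\abs{f(u)-g(u)}\diff u + F(\{x:\twonorm{x}\ge T/2\}) + \frac{16 d\log T}{T}.
\]
Bounding the first term by the $\ell^\infty$ integral via the first containment, and the middle term by $\mu$ via the second containment (valid once $T\ge 2C\sqrt{d}$), gives, for every $T\ge\max(2C\sqrt{d},Kd)$,
\[
\pi(F,G) \le \parens{\frac{T}{\pi}}^d \int_{\inorm{u}\le T}\abs{f(u)-g(u)}\diff u + \frac{16 d\log T}{T} + \mu,
\]
which is the stated inequality.

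\textbf{Main obstacle.} There is no deep difficulty here; the proof is essentially bookkeeping. The points needing care are: (a) checking that each norm inequality is applied in the direction that produces an \emph{upper} bound (the Euclidean ball sits inside the $\ell^\infty$ ball, giving an upper bound on the integral after enlarging the domain; and the Euclidean tail sits inside a $\sqrt{d}$-rescaled $\ell^\infty$ tail, giving an upper bound on the escaping mass); (b) extracting the universal constant $K$ from the asymptotic ``$T=O(d)$'' threshold in \Cref{lem:berkes_philipp} and intersecting it with the new constraint $T\ge 2C\sqrt{d}$ to obtain the condition $T\ge\max(2C\sqrt{d},Kd)$; and (c) reconciling which norm the Lévy--Prokhorov metric in the conclusion refers to, which is dispatched by the observation $\pi_\infty\le\pi_2$ noted above.
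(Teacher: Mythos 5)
Your proof is correct and follows exactly the route the paper takes (which is stated in two sentences and left for the reader to fill in): apply \Cref{lem:berkes_philipp}, then use $\inorm{u}\le\twonorm{u}\le\sqrt{d}\,\inorm{u}$ to enlarge the integration domain to the $\ell^\infty$ ball, replace the Euclidean tail mass $F(\{\twonorm{x}\ge T/2\})$ by $\mu$ via the hypothesis once $T\ge 2C\sqrt{d}$, and extract the universal constant $K$ from the implicit ``$T=O(d)$'' threshold. One small remark: the corollary's hypothesis as printed integrates the characteristic function $f$ rather than the density $\varphi$; your argument implicitly reads this as a tail bound on $F$ (i.e., with $\varphi$), which is plainly the intended meaning since the paper's own proof uses it to control $F(\{\twonorm{x}\ge T/2\})$, so this is a typo in the paper rather than a gap in your proof.
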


\begin{proof}
We need to convert from the Euclidean norm used in \Cref{lem:berkes_philipp} to the infinity norm. We know that $\norm{u}_\infty \leq \norm{u}_2 \leq \sqrt{d} \norm{u}_\infty$, so after performing that change, the claim immediately follows.
\end{proof}

\begin{lemma}\label{lem:lp_bound_from_moments}
(This has been adapted from Corollary 38 of \citet{Anschuetz_2025}). Let $F, G$ be distributions on $\RR^d$ with densities $\varphi, \gamma$ and characteristic functions $f, g$. Assume that each moment of order $k' \leq k$ of $F$ differs from that of $G$ by an additive error of at most $\epsilon > 0$ and assume that all moments of order $k' > k$ are bounded by $(C \sqrt{k'})^{k'}$ for some constant $C$. Let $\zeta > 0$ be a sufficiently small constant. Assume that
\begin{equation}
\int_{\norm{x}_\infty \geq \frac{1}{2 \sqrt{d}}\min\braces{\frac{1 - \zeta}{d} \log(\epsilon^{-1}), \frac{\sqrt{k+1}}{2e C d}}} f(x) \diff x \leq \mu.
\end{equation}
Also assume that
\begin{equation}
d^2 = o\parens{\min\braces{\frac{\log(\epsilon^{-1})}{\log \log(\epsilon^{-1})}, \frac{\sqrt{k}}{\log k}}}.
\end{equation}
Then, the L\'evy-Prokhorov distance is bounded by
\begin{equation}
\pi(F, G) = O\parens{\frac{d^2 \log \log (\epsilon^{-1})}{\log(\epsilon^{-1})} + \frac{d^2 \log k}{\sqrt{k}} + \mu}.
\end{equation}
\end{lemma}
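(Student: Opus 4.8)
The plan is to deduce this from \Cref{cor:modified_berkes_philipp} by choosing the truncation radius
\begin{equation}
T = \min\braces{\frac{(1-\zeta)\log(\epsilon^{-1})}{d}, \; \frac{\sqrt{k+1}}{2eCd}},
\end{equation}
which is exactly $2C'\sqrt{d}$ for the tail cutoff $C' = \tfrac{1}{2\sqrt{d}}\min\braces{\cdots}$ appearing in the $\mu$-tail hypothesis, and then to reduce everything to bounding the Fourier-side integral $\int_{\inorm{u}\leq T}\abs{f(u)-g(u)}\diff u$. First I would observe that the assumption $d^2 = o\parens{\min\braces{\log(\epsilon^{-1})/\log\log(\epsilon^{-1}), \; \sqrt{k}/\log k}}$ forces both arguments of the minimum to tend to infinity and to exceed $Kd$, so that $T \geq \max(2C'\sqrt{d}, Kd)$ and \Cref{cor:modified_berkes_philipp} applies, giving
\begin{equation}
\pi(F,G) \leq \parens{\frac{T}{\pi}}^{d}\int_{\inorm{u}\leq T}\abs{f(u)-g(u)}\diff u + \frac{16d\log T}{T} + \mu.
\end{equation}
It then suffices to show the first term is negligible and to read off the asymptotics of $\tfrac{16d\log T}{T}$.

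Next I would bound $\abs{f(u)-g(u)}$ pointwise on $\braces{\inorm{u}\leq T}$ by Taylor-expanding the characteristic functions to order $k$. Writing $P^F_k(u) = \sum_{\abs{\alpha}\leq k}\tfrac{\iunit^{\abs{\alpha}}}{\alpha!}u^\alpha\,\EE_F\brackets{X^\alpha}$ and $P^G_k$ analogously, I split $\abs{f-g}\leq\abs{P^F_k-P^G_k}+\abs{f-P^F_k}+\abs{g-P^G_k}$. Using the multi-index identity $\sum_{\abs{\alpha}=n}\tfrac{1}{\alpha!}=\tfrac{d^n}{n!}$ and that each moment of order at most $k$ matches to within $\epsilon$, the first term is at most $\epsilon\sum_{n=0}^{k}\tfrac{(d\inorm{u})^n}{n!}\leq\epsilon\,e^{dT}$. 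For each remainder, the exact estimate $\abs{e^{\iunit t}-\sum_{n=0}^{k}\tfrac{(\iunit t)^n}{n!}}\leq\tfrac{\abs{t}^{k+1}}{(k+1)!}$ applied at $t=\angles{u,X}$, together with the order-$(k+1)$ moment bound and $(k+1)!\geq((k+1)/e)^{k+1}$, gives $\abs{f-P^F_k}\leq\parens{\tfrac{eCd\inorm{u}}{\sqrt{k+1}}}^{k+1}$, which is at most $2^{-(k+1)}$ since $\inorm{u}\leq T\leq\tfrac{\sqrt{k+1}}{2eCd}$. Hence $\abs{f(u)-g(u)}\leq\epsilon\,e^{dT}+2^{-k}$ on the domain, and the Fourier term is bounded by $\parens{\tfrac{2T^2}{\pi}}^{d}\parens{\epsilon\,e^{dT}+2^{-k}}$.

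Finally I would check that this Fourier contribution vanishes faster than the claimed rate. Since $T\leq\tfrac{(1-\zeta)\log(\epsilon^{-1})}{d}$ we have $dT\leq(1-\zeta)\log(\epsilon^{-1})$, so $\epsilon\,e^{dT}\leq\epsilon^{\zeta}$; and since $T$ is at most both arguments of the minimum, $\log T\leq\log\log(\epsilon^{-1})+O(1)$ and $\log T\leq\tfrac12\log(k+1)+O(1)$, so the prefactor $\parens{\tfrac{2T^2}{\pi}}^{d}=e^{O(d\log T)}$ is at most $e^{O(d\log\log(\epsilon^{-1}))}$ and at most $e^{O(d\log k)}$. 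The hypothesis $d^2=o(\cdots)$ then gives $d\log\log(\epsilon^{-1})=o(\log(\epsilon^{-1}))$ and $d\log k=o(\sqrt{k})$, so both $\parens{\tfrac{2T^2}{\pi}}^{d}\epsilon\,e^{dT}$ and $\parens{\tfrac{2T^2}{\pi}}^{d}2^{-k}$ are super-polynomially small. What remains is $\tfrac{16d\log T}{T}$, and since $\tfrac{1}{T}=\max\braces{\tfrac{d}{(1-\zeta)\log(\epsilon^{-1})}, \; \tfrac{2eCd}{\sqrt{k+1}}}$ while $\log T$ is bounded by the logarithm of either argument of the minimum, this is $O\parens{\tfrac{d^2\log\log(\epsilon^{-1})}{\log(\epsilon^{-1})}+\tfrac{d^2\log k}{\sqrt{k}}}$; adding $\mu$ gives the claim.

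The step I expect to be the main obstacle is controlling the combinatorial prefactor $\parens{T/\pi}^{d}$ — equivalently the $e^{\Theta(d\log T)}$ blow-up coming from the $\Theta(d^n)$ multi-indices at each Taylor degree — against the exponential smallness $\epsilon^{\zeta}$ and $2^{-k}$ of the matched characteristic functions. It is precisely here that the quantitative condition $d^2=o\parens{\min\braces{\log(\epsilon^{-1})/\log\log(\epsilon^{-1}), \; \sqrt{k}/\log k}}$ and the constant slack $\zeta$ built into the choice of $T$ are used, and extracting the correct asymptotics of $\tfrac{16d\log T}{T}$ under both branches of the minimum requires some care; by contrast, the Taylor estimates and the verification of the hypotheses of \Cref{cor:modified_berkes_philipp} are routine.
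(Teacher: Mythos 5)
Your proof is correct and follows essentially the same route as the paper: same choice of $T$, invoke \Cref{cor:modified_berkes_philipp}, bound the Fourier-side integral by a volume-times-supremum estimate $\parens{\tfrac{2T^2}{\pi}}^{d}\sup_{\inorm{u}\leq T}\abs{f-g}$, verify it is super-polynomially small under the assumption on $d^2$, and read off that $\tfrac{16d\log T}{T}$ dominates. The only real difference is cosmetic: you derive the pointwise estimate $\abs{f(u)-g(u)}\leq\epsilon e^{dT}+2^{-k}$ from scratch via Taylor expansion and the multi-index identity $\sum_{\abs{\alpha}=n}1/\alpha!=d^n/n!$, whereas the paper delegates that step to Lemma~37 of \citet{Anschuetz_2025}.
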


\begin{proof}
Let
\begin{equation}
T = \min\braces{\frac{1 - \zeta}{d}\log(\epsilon^{-1}), \frac{\sqrt{k + 1}}{2 e C d}}.
\end{equation}
This, with the stated assumptions, satisfies the conditions for applying \Cref{cor:modified_berkes_philipp}, which gives us that
\begin{equation}
\pi(F, G) \leq \widetilde{O}(T^{2d}) \sup_{\norm{u}_\infty \leq T} \abs{f(u) - g(u)} + O\parens{\frac{d \log T}{T}} + O(\mu).
\end{equation}
Since
\begin{equation}
dT \leq \frac{\sqrt{k + 1}}{2 e C},
\end{equation}
we can apply Lemma 37 of \citet{Anschuetz_2025} to say that
\begin{equation}
\sup_{\norm{u}_\infty \leq T} \abs{f(u) - g(u)} \leq
\sup_{\norm{u}_1 \leq d T} \abs{f(u) - g(u)} \leq
\epsilon e^{d T} + 2^{1 - k} \leq \epsilon^\zeta + 2^{1 - k}.
\end{equation}
We also have that
\begin{equation}
T^{2d} = \exp(2d \log T) = \exp\parens{o(\min\{\log(\epsilon^{-1}), \sqrt{k}\})}.
\end{equation}
So,
\begin{equation}
\widetilde{O}(T^{2d}) \sup_{\norm{u}_\infty \leq T} \abs{f(u) - g(u)} + O\parens{\frac{d \log T}{T}} \leq O\parens{\epsilon^{\zeta + \sqrt{k}} + e^{\sqrt{k} + (1 - k) \log 2}}.
\end{equation}
Now, looking at the second term, we have that
\begin{equation}
\begin{aligned}
& O\parens{\frac{d \log T}{T}} \leq O\parens{d \min\braces{\log \log (\epsilon^{-1}), \log k} \max\braces{\frac{d}{\log(\epsilon^{-1})}, \frac{d}{\sqrt{k}}}} \leq \\
\leq{} & O\parens{\frac{d^2 \log \log (\epsilon^{-1})}{\log(\epsilon^{-1})} + \frac{d^2 \log k}{\sqrt{k}}}.
\end{aligned}
\end{equation}
This second term dominates the first, which means that
\begin{equation}
\pi(F, G) = O\parens{\frac{d^2 \log \log (\epsilon^{-1})}{\log(\epsilon^{-1})} + \frac{d^2 \log k}{\sqrt{k}} + \mu}.
\end{equation}
\end{proof}



\end{document}